\theoremstyle{plain}
\newtheorem{theorem}{\protect\theoremname}
 \definecolor{BLACK}{gray}{0}
 \definecolor{WHITE}{gray}{1}
 \definecolor{RED}{rgb}{1,0,0}
 \definecolor{GREEN}{rgb}{0,1,0}
 \definecolor{BLUE}{rgb}{0,0,1}
 \definecolor{CYAN}{cmyk}{1,0,0,0}
 \definecolor{MAGENTA}{cmyk}{0,1,0,0}
 \definecolor{YELLOW}{cmyk}{0,0,1,0}
\newcommand{\setasstag}[1]{
  \let\oldtheass\theass
  \renewcommand{\theass}{#1}
  \g@addto@macro\endass{
    \addtocounter{ass}{-1}
    \global\let\theass\oldtheass}
  }
\numberwithin{equation}{section}
\providecommand{\theoremname}{Theorem}
\begin{document}

\title[An ergodic theorem for quantum processes]{An ergodic theorem for quantum processes with applications to matrix product states}

\author{Ramis Movassagh}
\email{ramis@us.ibm.com}
\address{IBM Quantum, MIT-IBM Watson AI Lab, Cambridge MA, 02142, USA}

\author{Jeffrey Schenker}
\email{jeffrey@math.msu.edu}
\address{Department of Mathematics, Michigan State University, East Lansing	MI, 48824, USA}

\begin{abstract}
Any discrete quantum process is represented by a sequence of quantum channels.  
We consider ergodic quantum processes obtained by a map that takes the points along the trajectory of a discrete ergodic dynamical system to the space of quantum channels.
Under a natural irreducibility condition, we obtain a theorem showing that the state under such a process converges exponentially fast to an ergodic sequence 
depending on the process, but independent of the initial state. As an application, we describe the thermodynamic limit of ergodic matrix product states and prove that the 2-point correlations of local observables in such states decay exponentially with their distance in the bulk.  
\end{abstract}

\maketitle

\section{Introduction}
The change of a physical system over a discrete unit of time, including the internal dynamics and interaction with the environment, can be represented by a quantum channel.
The evolution of the system at discrete times is then obtained by the application of a sequence of quantum channels, which may be termed a \emph{quantum process}. Mathematically, a quantum channel is a completely positive and trace
preserving linear transformation of the system's density matrix, $\rho \mapsto \phi(\rho)$. In a finite dimensional Hilbert space, any such map can be written in the Kraus form \cite{krausk1971}
\begin{equation}
\phi(\rho)=\sum_{i=1}^{d}B^{i}\;\rho\; B^{i\:\dagger}\quad,\label{eq:phi_channel}\end{equation}
where $\dagger$ denotes the adjoint (conjugate transpose) and the following holds
\begin{equation}
\sum^d_{i=1}B^{i\: \dagger }B^i \ = \bbI \quad .\label{eq:tracepreserving}
\end{equation}
The net change in the state resulting from the quantum process is obtained from composition of the channels acting on the initial state
\begin{equation}
\rho_n \ = \ \phi_{n}\circ\cdots\circ\phi_{1}(\rho_0)=\sum^d_{i_{1},\dots,i_{n}=1}B_{n}^{i_{n}}\cdots B_{1}^{i_{1}} \: \rho_0\: B_{1}^{i_{1}\: \dagger}\cdots B_{n}^{i_{n} \: \dagger}\quad.\label{eq:QChannel_Composition}
\end{equation}

In the present work, we study general \emph{ergodic sequences} of  channels in the following sense.  Consider a map from the points of a discrete, ergodic dynamical system $\Omega$ to the space quantum channels.  Starting from any point on $\Omega$, we obtain an ergodic sequence of quantum channels by evaluating the map at the points along the corresponding trajectory. 

Here we answer the following questions: What is the action of an ergodic composition given by  equation \eqref{eq:QChannel_Composition}? Is there a convergence to a simple and general limit? We obtain a general theorem (Theorem \ref{thm:main}) for an ergodic sequence of quantum channels, with an underlying assumption of non-negligible decoherence.  This theorem states that the sequence of states $\rho_n$ converges to a fixed-point sequence that only depends on the sequence of channels and is independent of the initial state. Theorem \ref{thm:bound} then shows that the composition of such channels converges exponentially fast to a stochastic sequence of replacement (rank-one) channels. A corollary of this result is the well-known convergence in the translation invariant case to a fixed replacement channel.

Theorems \ref{thm:main} and \ref{thm:bound} also apply to sequences of completely positive maps, without imposing the trace preserving condition equation \eqref{eq:tracepreserving}.  Such sequences are naturally related to the expectation values of observables in a matrix product state (MPS).  We apply our results to an ergodic MPS, wherein the matrices in the MPS form an ergodic sequence.  We derive a formula for the
expectation values of observables in an MPS. We then  prove (Theorem \ref{thm:corelations}) that the correlation functions of local observables decay exponentially
with their distance. 

\subsection{Background and relation to other works}The generic aspects of the behavior of quantum systems have long been of interest. However, because of the theoretical challenge of dealing with the general case, 
in the past `ergodic' quantum channels were considered in various works, each of which, to the best of our knowledge,
is a very special subset of possibilities
in this work.
For example, in \cite{burgarth2013ergodic}, a channel was chosen at random from some ensemble
and then repeatedly applied, i.e., the sets $\setb{B_{k}^i}{i=1,\ldots,d}$ were all equal. In \cite{bruneau2014repeated}, time dynamics were analyzed for a quantum system with repeated independently chosen random interactions with an environment. Other instances studied include certain independent random channels and their compositions (e.g., from a finite set of random isometries) \cite{collins2010random,collins2011random}. See \cite{collins2016random} for a review. Our work considers a general ergodic sequence and therefore serves as a vast generalization of the past work. In particular, this work allows for long-range correlations among the channels, or even pseudo-randomness generated by quasi-periodic dynamics. This includes the previously considered extreme cases of independently and identically distributed (iid) and (time)-translation invariant channels. 

The formalism of quantum channels naturally lends itself to the calculation of expectation values of observables and correlation functions of local observables of physical low-dimensional quantum systems, which are well described by density matrix renormalization  group \cite{white1992density} and its natural representation in terms of MPS \cite{verstraete2008matrix}. Previous works on matrix product states have focused on the translation invariant case \cite{fannes1991,perez2006matrix,brandao2013area}. Theorem \ref{thm:bound} allows us to move beyond the translation invariant case to analyze the thermodynamic limit of ergodic (one-dimensional) MPS. 
The ergodic MPS that we consider may be translation invariant, quasi-periodic, or formed from random matrices with arbitrary correlations.  

In order to reify our theory, in a companion paper \cite{movassagh2021} we apply our main result (Theorem \ref{thm:bound}) to the translation invariant case as well as a natural example in which each channel is an independent random Haar isometry:
$$\phi_j(\rho)  \ = \ \tr_r[U_j\rho\otimes Q_r U_j^\dagger] \quad ,$$
where $Q_r$ is a pure state on $\bb{C}^{r\times r}$, $U_j$ is a sequence of independent Haar distributed $Dr\times Dr$ unitaries, and $\tr_r$ is the partial trace from $\bb{C}^{Dr\times Dr}$ to $\bb{C}^{D\times D}$. We analyze the asymptotics with respect to the dimension of the environment ($r$), or of the system ($D$), or both tending to infinity, and prove that the limiting states $\rho_n$ are given by
$$ \rho_n \ = \ \frac{1}{D} \bb{I}_D + \frac{1}{\sqrt{1+rD^2}} W_n \quad , $$
where $W_n$ are asymptotically Gaussian with distribution proportional to $\e^{-\frac{D}{2} \tr[ W^2]} \delta(\tr [W]),$ where $\delta$ is the Dirac delta measure. We also present consequences for ergodic MPSs in \cite{movassagh2021}, using Theorem \ref{thm:corelations} and the theory developed in \S\ref{sec:emps} of the present paper to analytically compute the entanglement spectrum
of an ergodic MPS across any cut
as well as the one- and two-point correlation functions in an ergodic MPS. 

\subsection{Physical implications} 
In the companion paper \cite{movassagh2021}, we discuss the physical consequences of the theorems presented here.
Theorems  \ref{thm:main} and \ref{thm:bound}, to the best of our knowledge, are the first general theorems proved that apply to correlated quantum processes. Similarly, Theorem \ref{thm:corelations} for the first time demonstrates a general exponential decay of correlations for ergodic MPS, and therefore, a vast class of ground states of interacting quantum matter. 

Physically realistic quantum processes inevitably have temporal correlations, even if the underlying process is Markovian. In the latter case, any two consecutive times are correlated.  Similarly, correlated quantum channels arise naturally in the context of MPS for the study of non-trivial systems and states of interacting quantum many-body systems.  For example, in any finite system simulation of one-dimensional systems, the matrices that result in the density matrix renormalization group procedure will inevitably be correlated.  As such the consideration of iid channels and the MPS formed from them is mostly of theoretical interest.

Three physical corollaries of our theorems are~\cite{movassagh2021}: 

{\bf (1)} Engineered non-equilibrium phases of matter realized by time-periodic driven Hamiltonians (e.g., in Floquet systems) \cite{li2019observation,lindner2011floquet,shtanko2018stability}, are {\it only} meta-stable in presence of interactions with an environment at positive temperature.

{\bf (2)} An ergodic sequence of quantum channels with non-negligible decoherence converges to the same final sequence irrespective of the initial state. These channels are asymptotically replacement channels, which implies that the process cannot even convey classical information with respect to the initial state. This is intuitively seen from a unique fixed point that is reached irrespective of the input quantum state. The channels may be very correlated or even time-translation invariant however. For example, in the near-term quantum computing era when the random quantum circuits have decoherence at each step of computation, the initial memory of the state is exponentially lost with the number of applied gates.

{\bf (3)}  It was previously proved that a non-vanishing gap in the thermodynamic limit implies an exponential decay of correlations \cite{Nachtergaele2006,hastings2006spectral}, and that in one-dimension an area law for entanglement entropy holds \cite{hastings2007area}. Brand{\~a}o and Horodecki \cite{brandao2013area} proved that in one-dimension the exponential decay of correlations implies an area law. We prove somewhat of a partial converse, that says finitely correlated states with an ergodic MPS representation have correlation function that decay exponentially with distance.

\subsection{Illustration}
\begin{figure}
	\centering
	\includegraphics[width=\textwidth]{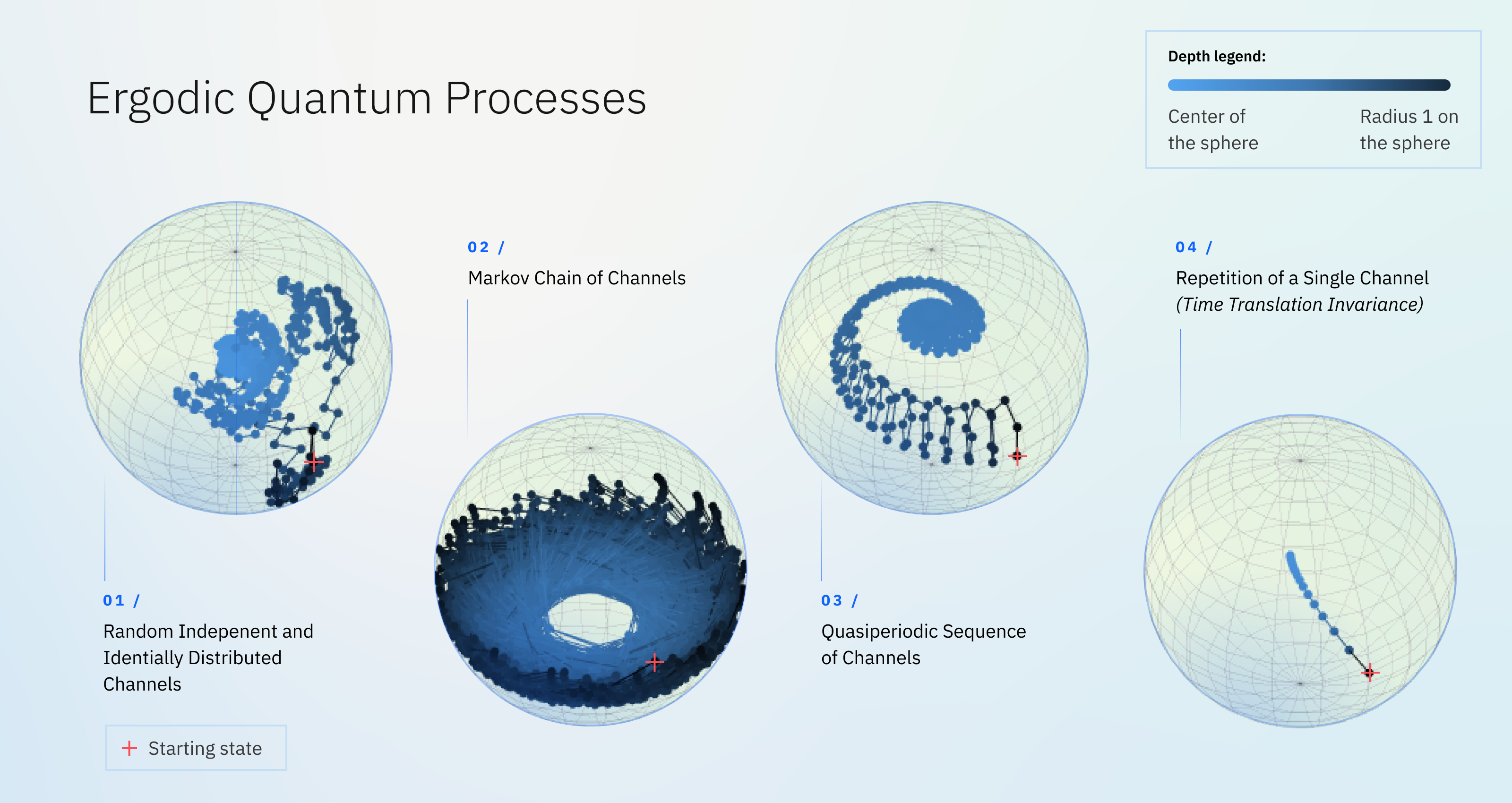}
	\caption{\label{fig:Bloch}Evolution of the state of a qubit under four ergodic quantum processes.}
\end{figure}
One of the simplest examples of a channel is the amplitude damping channel of a qubit.  The state space of a single qubit can be identified with the closed unit ball in $\mathbb{R}^3$, the so-called \emph{Bloch sphere}. A point $\vec{v}=(v_1,v_2,v_3)$ in the Bloch sphere corresponds to the density matrix $\rho_{\vec{v}}= \frac{1}{2}(I+\vec{v}\cdot \vec{\sigma})$, where
$\vec{\sigma}=(\sigma_1,\sigma_2,\sigma_3)$ are the Pauli matrices $$\sigma_1 = \left ( \begin{matrix}
0 & 1 \\ 1 & 0
\end{matrix} \right )
\ , \quad \sigma_2 = \left ( \begin{matrix}
	0 & -i \\ i & 0
\end{matrix} \right )\ ,\quad  \text{and} \quad  \sigma_3  =  \left ( \begin{matrix}
1 & 0 \\ 0 & -1
\end{matrix}\ \right ) .$$
The amplitude damping channel with rate $\gamma \in [0,1]$ and axis $\vec{u}=(\sin \theta \cos \phi, \sin \theta \sin \phi, \cos\theta)$ is the completely positive trace preserving map
$$\phi_{\gamma,\vec{u}} (\rho) \ = \  E\: \rho\: E^\dagger + F\: \rho\: F^\dagger \ , $$
where

\begin{equation*}E \ = \ S_{\vec{u}} \left (\begin{matrix} 1 & 0 \\ 0 & \sqrt{1-\gamma} \end{matrix} \right ) S_{\vec{u}}^\dagger \ \quad\text{and}\quad
F \ = \ S_{\vec{u}} \left (\begin{matrix} 0 & \sqrt{\gamma}  \\ 0  & 0 \end{matrix} \right ) S_{\vec{u}}^\dagger\;,
\end{equation*}
with $S_{\vec{u}}$ defined by
\begin{equation*}S_{\vec{u}}= \left  (\begin{matrix}
	\cos \nicefrac{\theta}{2} & -e^{-i \phi} \sin \nicefrac{\theta}{2} \\
	e^{i\phi} \sin \nicefrac{\theta}{2} & \cos \nicefrac{\theta}{2}
\end{matrix} \right ) \ . 
\end{equation*}
This channel models relaxation, with the rate $\gamma$, of a qubit to the pure state $\frac{1}{2}(I+\vec{u}\cdot \vec{\sigma})$, i.e., the ground state of the spin Hamiltonian $H_{\vec{u}}=-\vec{u}\cdot \vec{\sigma}$. More generally, one may consider relaxation coupled with the Schr\"odinger dynamics of $H_{\vec{u}}$ over an interval $\delta$ to obtain the generalized amplitude damping channel
$$\phi_{\gamma,\vec{u},\delta}(\rho) \ = \ \phi_{\gamma,\vec{u}} \left ( e^{-i\delta H_{\vec{u}}}\, \rho\, e^{i \delta H_{\vec{u}}} \right ) \ . $$

One may obtain a large family of ergodic quantum processes, for instance, by allowing $\vec{u}$ to evolve according to an ergodic process on the Bloch sphere $S^2$ (with $\gamma$ and $\delta$ fixed).  Such processes would model the relaxation of a qubit toward an axis that fluctuates with time, such as might be expected if the qubit Hamiltonian fluctuates while the interaction with the environment remains fixed.   The longtime behavior of such processes can depend quite strongly on the nature of the ergodic process on $S^2$.  More generally one could allow for all three parameters $(\gamma,\vec{u},\delta)$ to evolve according to an ergodic process on $[0,1]\times S^2 \times [0,\infty)$.

In Figure \ref{fig:Bloch} we plot the first 2000 steps $\rho_j=\phi_j\circ\cdots\phi_1(\rho)$, $j=1,\ldots,2000$ for four distinct ergodic quantum processes of the form $\phi_j=\phi_{\gamma,\vec{u}_j,\delta}$, $\vec{u}_j=(\cos \alpha_j,\sin \alpha_j,0)$. The four processes considered are
\begin{enumerate}
    \item Random channels, with $\delta=\pi/12$, $\gamma=0.01$ and $\alpha_j$ chosen independently and uniformly from $[0,2\pi)$.
    \item A Markov chain of channels with $\delta=\pi/12$, $\gamma=0.4$, $\alpha_1=0$ and $\alpha_{j+1}=\alpha_j- x_j$, where $x_j$ are independent and $x_j=1$ or $0$ each with probability one half.
    \item A quasi-periodic family of channels with $\delta=\pi/12$, $\gamma=0.01$, $\alpha_1=0$ and $\alpha_{j+1}=\alpha_j+1$.
    \item A periodic family of channels with $\delta=\pi/12$, $\gamma=0.1$, $\alpha_1=0$ and $\alpha_{j+1}=\alpha_j+2\pi/3$.  By sampling the sequence $\rho_j$ only at $j\equiv 0\: (\text{mod} 3)$, we get a trajectory obtained by repeating the single channel $\phi_3\circ \phi_2 \circ \phi_1$.
\end{enumerate}

\section{Ergodic theory of quantum processes and matrix product states}
\subsection{Notation}Let $\bbM_D=\bbC^{D\times D}$ denote the space of $D\times D$ matrices.  Recall the trace-norm of $M\in\bbM_D$, $ \norm{M}_1  =  \tr [|M|]$
as well as the Hilbert-Schmidt inner product and norm, $ \langle \wt{M}  , \, M \rangle =\tr[\wt{M}^{\dagger}M]$ and $\norm{M}_2^{2}=\tr[M^{\dagger}M]$.
Let $\mc{L}(\bbM_D)$ denote the set of linear maps from $\bbM_D$ to itself.
Given $\phi\in\mc{L}(\bbM_D)$, we define the adjoint map $\phi^{*}$ via the Hilbert-Schmidt inner product: $
\tr[\wt{M}^{\dagger}\phi(M)] = \tr[[\phi^{*}(\wt{M})]^{\dagger}M]$.
 
Let $\bb{P}_D$ denote the closed cone of positive semi-definite matrices in $\bbM_D$,
\[
\bb{P}_D=\setb{M\in \bb{M}_D}{ \mb{z}^\dagger M \mb{z}  \ge 0 \quad \text{for all }\mb{z}\in \bbC^D } \quad .
\]
The interior of $\bb{P}_D$
is the open cone of positive definite matrices,
\[
\bb{P}_D^\circ =\setb{M\in \bb{M}_D}{ \mb{z}^\dagger M \mb{z}  > 0 \quad \text{for all }\mb{z}\in \bbC^D \text{ with } \mb{z}\neq 0} \quad .
\]
A map $\phi \in \mc{L}(\bbM_D)$ is \emph{positive} if $\phi(\bb{P}_D) \subset \bb{P}_D$, i.e., $\phi$ maps positive semi-definite matrices to positive semi-definite matrices. The map is \emph{strictly positive} if $\phi(\bb{P}_D\setminus\{0\})\subset \bb{P}_D^\circ$, i.e., $\phi$ maps positive semi-definite matrices to positive definite matrices.  A \emph{completely positive} map is one such that $\phi\otimes \bbI_{r}:\mc{L}(\bbM_D)\otimes \mc{L}(\bbM_r ) $  is positive for every $r$, where $\bbI_{r}$ denotes the identity map on $\bbM_r$; let $\mc{CP}(\bbM_D)$ denote the set of completely positive maps over $\bbM_D$,
\[\mc{CP}(\bbM_D) \ = \ \setb{\phi\in \mc{L}(\bbM_D)}{\phi \text{ is completely positive.}} \quad . \]
By Kraus's theorem \cite{krausk1971,watrous2018theory}, $\phi\in \mc{CP}(\bbM_D)$ if and only if $\phi$ is of the form equation \eqref{eq:phi_channel}. 
A map $\phi \in \mc{L}(\bbM_D)$ is \emph{trace preserving} if $\tr [\phi(M)] =\tr [M]$ for all $M$; equivalently $\phi^*(\bbI_D)=\bbI_D$.
A \emph{quantum channel} is a completely positive trace preserving map.

Let $(\Omega,\mathcal{F},\Pr)$ be a probability space with 
\begin{enumerate}
    \item $T:\Omega\rightarrow\Omega$
an  invertible, ergodic, and measure preserving map, and
\item $\phi_0:\Omega \rightarrow \mc{CP}(\bbM_D)$ a completely positive map valued random variable (taking the Borel $\sigma$-algebra on $\mc{CP}(\bbM_D)$).
\end{enumerate}
 Recall that $T$ is \emph{ergodic} provided $\Pr[A]=0$ or $1$ for any measurable set $A$ with $T^{-1}(A)=A$. We follow the convention in probability theory and suppress the independent variable $\omega \in \Omega$ in most formulas; when it is needed we will use a subscript to denote the value of a random variable at a particular $\omega \in \Omega$, \textit{e.g.},  $\phi_{0;\omega}$. To specify $\phi_0$ we could introduce matrix-valued random variables $B^{i}_0:\Omega\rightarrow\bbM_D$,
for $i=1,\ldots,d$, and take
\begin{equation}\label{eq:phidefn}
\phi_0(M) \ = \ \sum_{i=1}^d B^{i}_0\; M \;  B_0^{i\:\dagger}\quad.
\end{equation}
If we further impose the
condition 
\begin{equation}\label{eq:channelcondition}
\sum_{i=1}^d  B_0^{i\:\dagger} B_0^{i} \ = \ \bbI_D \quad \text{ almost surely },
\end{equation}
then $\phi_0$ is almost surely trace preserving, so $\phi_0$ is almost surely equal to a quantum channel valued random variable.  We note, however, that the matrices $B_0^i$, $i=1,\ldots,d$, are not uniquely determined by the channel $\phi_0$.  For this reason, we formulate our results directly in terms of the channel valued random variable $\phi_0$ without reference to a specific Kraus representation.  

\subsection{Ergodic theorems for quantum processes}
The main focus of this paper is to study the composition of a sequence of maps obtained by evaluating $\phi_0$ along the trajectories of the ergodic map $T$:
\begin{equation}\label{eq:phin}
\phi_{n;\omega} \ = \ \phi_{0;T^n\omega}\quad ,
\end{equation}
with $n\in \bbZ$. 
For our general result, we do not require the maps to be quantum channels, i.e., trace preserving.  Nonetheless, we take inspiration from the quantum channel case and consider the dynamics $\rho_n = \phi_{n}(\rho_{n-1})$ induced by the sequence $(\phi_n)_{n=0}^\infty$ on (non-normalized) states of a $D$-dimensional quantum system with Hilbert space $\mc{H}=\bb{C}^D$.

Consider the  process 
\begin{equation}\label{eq:PhiDefn} \Phi_N \ = \ \phi_N \circ \cdots \circ \phi_0 
\end{equation}
for $N\ge 0$.  The only assumption we need is that
\begin{ass}\label{ass:main}
With probability one there exists an $N_0>0$ such that $\Phi_N$ is strictly positive for all $N\ge N_0$.
\end{ass}
\noindent Physically, this assumption states that no proper subspace of the system is invariant under the dynamics. For more discussion of the physical motivation behind Assumption \ref{ass:main}, see \cite{movassagh2021}

Although Assumption \ref{ass:main} is physically natural, it is not formulated in a way that is easily verifiable.  However, it is \emph{equivalent} to  two more easily verified assumptions:
\begin{lem}\label{lem:Assequiv} Assumption \ref{ass:main} is equivalent to the following two statements taken together:
\begin{enumerate}
    \item  For some $n_{0}>0$, $ 	\Pr\left[\Phi_{n_0}\text{ is strictly positive }\right]\ > \ 0 $.
    \item  With probability one,  $(\ker\phi_{0})\cap\bb{P}_D \ = \  (\ker\phi_{0}^{*})\cap\bb{P}_D \ =\ \{0\}$. \\
    That is, if $\phi_{0}(M)=0$ or $\phi_{0}^{*}(M)=0$ with $M\in\bb{P}_D$, then $M=0$.
\end{enumerate}
\end{lem}
\begin{rems*} 1) Conditions (1) and (2) are manifestly verifiable by a finite computation, while Assumption \ref{ass:main}, as stated, is not. The proof of Lemma \ref{lem:Assequiv} is given below in \S\ref{sec:existence}. 2) A map $\phi$ is strictly positive if and only if $\phi^*$ is strictly positive.\footnote{Indeed, if $\phi$ is strictly positive and $M\in \bb{P}_D$ is non-zero, then we have $\tr [\phi^*(M) M'] = \tr [M \phi(M')] >0$ for any non-zero $M'\in \bb{P}_D$, since $\phi(M')>0$.  Thus $\phi^*(M)$ is strictly positive.}  Thus condition (1)  is equivalent to $\phi_{0}^*\circ \cdots \circ \phi_{N_0}^*$ being strictly positive with positive probability.  3) If $\phi_0$ is trace preserving, i.e., a quantum channel, then $\tr [\phi_0(M)]=\tr[M]$ for any $M$, so $\ker \phi_0\cap \bb{P}_D=0$. However, the other half of condition (2) (that $\ker \phi_0^* \cap \bb{P}_D=\{0\}$) does not necessarily hold.  For example, if $D$ is even and $\phi(M)=PMP+SMS^\dagger$ with $P$ a projection onto a subspace of dimension $D/2$ and $S$ a partial isometry from $\mathbb{I}_D-P$ to $P$, then $\phi$ is a channel but $\phi^*(\mathbb{I}_D-P)=0$.
 \end{rems*}

The classical Perron-Frobenius theorem \cite{perron1907theorie,frobenius1912matrizen} has been generalized to linear maps preserving a convex cone, e.g., see \cite{kreinrutman}.  Based on such a generalization, Evans and H{\o}egh-Krohn \cite{evans1978spectral} obtained results for positive maps on $\bbM_D$. It follows from \cite[Theorem 2.3]{evans1978spectral} that,
if $\Phi_{N}$ is strictly positive, then there is a unique (up to
scaling) strictly positive matrix $R_{N}\in \bbM_D$ such that
\begin{equation}\label{eq:Phi_N}
\Phi_{N}(R_{N})=\lambda_{N}R_{N}\quad,
\end{equation}
where $\lambda_{N}$ is the spectral radius of $\Phi_{N}$.  Similarly,
there is a unique (up to scaling) strictly positive matrix $L_{N}$
such that
\begin{equation}\label{eq:PhiStar_N}
\Phi_{N}^{*}(L_{N})=\lambda_{N}L_{N}\quad.
\end{equation}

We extend the process to $-N<0$ by defining
\begin{equation}\label{eq:PhiDefnNeg}
\Phi_{-N} \ = \ \phi_0 \circ \cdots \circ \phi_{-N}  \ .
\end{equation}
By Assumption \ref{ass:main}, $\Phi_N$ is strictly positive for all sufficiently large $N>0$.  In Lemma \ref{lem:stopping} below, we show below that, with probability one, we also have $\Phi_{-N}$ strictly positive for all $N$ sufficiently large.  Thus the left and right eigen-matrices $R_N$ and $L_N$ are unique for large $|N|$.
We normalize $R_{N}$ and $L_{N}$ so that $ \tr [R_{N}]=\tr [L_{N}]=1$.

Our first result is that $L_N$ converges as $N\rightarrow \infty$, while $R_N$ converges as $N\rightarrow -\infty$.  
\begin{theorem}\label{thm:main} There are random
matrices $Z_{0},Z_{0}':\Omega\rightarrow \bbM_D$ such that $Z_{0},Z_{0}'\in \bb{P}_D^\circ$,
\[
\lim_{N\rightarrow-\infty}R_{N}=Z_{0} \quad , \qquad \text{and} \qquad 
\lim_{N\rightarrow\infty}L_{N}=Z_{0}'
\]
almost surely. Furthermore, if we set $Z_{n}=Z_{0;T^{n}\omega}$ and
$Z'_{n}=Z'_{0;T^{n}\omega}$, then
\[
Z_{n}=\phi_{n}\cdot Z_{n-1}\quad, \qquad \text{and} \qquad
Z_{n}'=\phi_{n}^{*}\cdot Z_{n+1}'\quad,
\]
where $\cdot$ denotes the projective action of a positive map on the
strictly positive $D\times D$ matrices of trace $1$:
\[
\phi_{n}\cdot M \ \equiv \ \frac{1}{\tr[\phi_{n}(M)]}\phi_{n}(M)\quad.
\]
\end{theorem} 
\begin{rems*} 1) If the maps $\phi_n$ are quantum channels, then $L_N=\frac{1}{D}\bbI$, so $Z_n'=\frac{1}{D}\bbI$ for all $n$.  2) This result is closely related in spirit to Oseledec's Multiplicative Ergodic Theorem \cite{oseledec1968multiplicative}, a  general result on convergence of singular vectors for products of linear transformations.   3) Theorem \ref{thm:main} generalizes a theorem of Hennion on the Perron-Frobenius eigenvectors of products of entry-wise positive matrices \cite{hennion1997limit}. In fact, Hennion's theorem can be seen as a special case of our result applied to the following maps
	\begin{equation}\label{eq:hennionform}
		\phi_0(M) \ = \ \sum_{\alpha,\beta} A_{0;\alpha,\beta} \mb{e}_\alpha \mb{e}_\beta^T M \mb{e}_\beta \mb{e}_\alpha^T \quad , 
	\end{equation}
	with $A_0$ a random matrix with non-negative entries, and $\mb{e}_\alpha$, $\alpha=1,\ldots,D$, the standard basis vectors of $\bbC^D$. An equivalent, simpler, expression to equation \eqref{eq:hennionform} is given by  $\phi_0(M)  =  \operatorname{diag}(A_0 \operatorname{vec}(M))$, where $\operatorname{vec}{M}$ is the $D$-dimensional vector consisting of the diagonal entries of $M$ and $\operatorname{diag}(\mb{v})$ is a diagonal matrix with the entries of the vector $\mb{v}$ on the diagonal.
\end{rems*}

Given $m<n$ in $\bbZ$, let $P_{n,m}$ denote the rank-one
operator
\begin{equation}
P_{n,m}(M)=\tr [Z'_{m}M ]\;Z_{n}\quad.
\end{equation}
Our second result states that, for $n-m$ large, the operator $\phi_{n}\circ\cdots\circ\phi_{m}$
is well approximated by $P_{n,m}$. To formulate this result precisely,
we use the operator norm for a map $\Phi\in \mc{L}(\bbM_D)$ inherited from the trace norm on $\bbM_D$, $
\norm{\Phi}_{1} \ =\ \max \setb{\tr[\:\abs{\Phi(M)}\:]\: } {\: \tr[\:|M|\:]=1} $.
\begin{theorem}\label{thm:bound}
Given $m<n$ in $\Z$, let $\Psi_{n,m}=\phi_{n}\circ\cdots\circ\phi_{m}$.
There is $0<\mu<1$ so that for each $x\in\bbZ$ the following bound holds:
\begin{equation}
\norm{\frac{1}{\tr[\Psi_{n,m}^{*}(\bbI)]}\Psi_{n,m}-P_{n,m}}_{1}\ \le\ C_{\mu,x}\, \mu^{n-m}\label{eq:main}
\end{equation}
for all $m\le x$ and $n\ge x$, with $C_{\mu,x}$
finite almost surely. \end{theorem}

\subsection{Ergodic matrix product states}\label{sec:emps}
The Kraus matrices associated to a random completely positive map specified as in equation \eqref{eq:phi_channel}
can be used to define a family of random matrix product states as follows. Let
\[
A_{n}^{i}=B_{0;T^{n}\omega}^{i\: \dagger }\quad,\qquad n\in\mathbb{Z}\text{ and }i=1,\ldots,d.
\]
Given an interval $[m,n]$ of $\mathbb{Z}$, we define the matrix product state
\begin{equation}
|\psi([m,n])\rangle=\frac{1}{\mathcal{N}}\sum_{i_{m},\ldots,i_{n}=1}^{d}\text{tr}[A_{m}^{i_{m}}\cdots A_{n}^{i_{n}} ] \ |i_{m},\ldots,i_{n}\rangle \quad , \label{eq:MPS-1}
\end{equation}
where $|i_m,\ldots,i_n\rangle$ are the elements of the computational basis on  $\bigotimes_{k=m}^{n}\mathbb{C}^{d}$, where $d=2$ corresponds to qubits, and the normalization constant is given by
\[
\mathcal{N}^{2}=\sum_{i_{m},\ldots,i_{n}=1}^{d}\left|\text{tr}[A_{m}^{i_{m}}\cdots A_{n}^{i_{n}}]\right|^{2}.
\]
For simplicity, we restrict our attention here to the periodic boundary condition states, as defined in equation \eqref{eq:MPS-1}. 

There is a close relation between matrix product states and completely positive maps, via which Theorem \ref{thm:bound} can be used to characterize the thermodynamic limit ($m\rightarrow-\infty$ and $n\rightarrow\infty$) of the states 
defined in   equation \eqref{eq:MPS-1}. A preliminary observation is that the normalization factor $\mathcal{N}$ can be  expressed as
\begin{equation}\label{eq:N2one}	
\mathcal{N}^{2}=\sum_{\alpha,\beta=1}^{D}\text{tr}\left[\mb{e}_\beta \mb{e}_\alpha^T \phi_{n}\circ\cdots\circ\phi_{m} \left(\mb{e}_\alpha \mb{e}_\beta^T\right)\right].
\end{equation}
Since $\setb{\mb{e}_\alpha \mb{e}_\beta^T}{\alpha,\beta=1,\ldots,D}$ is an orthonormal basis for $\bbM_D$, equation \eqref{eq:N2one} can, in turn, be written as
\[
\mathcal{N}^{2}=\text{Tr}[\phi_{n}\circ\cdots\circ\phi_{m}]\quad,
\]
where $\text{Tr}[\phi]$ denotes the trace of a linear map $\phi \in \mc{L}(\bbM_D)$.   Throughout this discussion, we use $\tr$ to denote the trace on $\bbM_D$ and $\Tr$ to denote the trace on $\mc{L}(\bbM_D).$

Let us now consider the state $|\psi(N)\rangle\equiv|\psi([-N,N])\rangle$ defined on $[-N,N]$ by equation \eqref{eq:MPS-1}.  Given $-N<m<n<N$ and a local  observable $O$ on the spins in $[m,n]$, let
\begin{equation}\label{eq:Ohatagain}
\widehat{O}(M)=\sum_{\substack{i_{m},\ldots,i_{n}\\
j_{m},\ldots,j_{n}}}
\langle i_{m},\ldots,i_{n}|O|j_{m},\ldots,j_{n}\rangle A_{n}^{i_{n}\: \dagger }\cdots A_{m}^{i_{m}\: \dagger }\;M \; A_{m}^{i_{m}}\cdots A_{n}^{i_{n}}
\end{equation}
which is a linear operator on $\bbM_D$. One may easily verify that the
(quantum) expectation of $O$ in $|\psi(N)\rangle$ 
is
\begin{equation}\label{eq:MPSformula}
\langle\psi(N)|O|\psi(N)\rangle =\frac{\text{Tr}\left[\phi_{N}\circ\cdots\circ\phi_{n+1}\circ\widehat{O}\circ\phi_{m-1}\circ\cdots\circ\phi_{-N}\right]}{\text{Tr}\left[\phi_{N}\circ\cdots\circ\phi_{-N}\right]}\quad .
\end{equation}

We can express the thermodynamic limit of $\langle\psi(N)|O|\psi(N)\rangle$ in terms of the matrices $Z_{m}$ and $Z_{m}'$ from Theorem \ref{thm:main} by using equation \eqref{eq:MPSformula} and Theorem \ref{thm:bound}. It is convenient to use Dirac notation for the Hilbert-Schmidt inner product on $\bbM_D$, with which 
we have $ P_{n,m}\ =\ \ket{Z_{n}}\bra{Z'_{m}}$ (with $P_{n,m}$ as in Theorem \ref{thm:bound}).
Let
\begin{equation}
W(O)=\lim_{N\rightarrow\infty}\dirac{\psi(N)}{O}{\psi(N)}\label{eq:TDlimit}
\end{equation}
denote the thermodynamic limit of $\ket{\psi(N)}$, where $O$
is any local observable. Using equation \eqref{eq:MPSformula} and Theorem \ref{thm:bound}, we 
compute $W(O)$ as follows:
\begin{equation}
 \label{eq:Wformula}
 \begin{aligned}
  W(O) \ &= \ \lim_{N\rightarrow\infty}\frac{\Tr\ket{Z_{N}}\dirac{Z'_{n+1}}{\wh{O}}{Z_{m-1}}\bra{Z'_{-N}}}{\Tr\ket{Z_{N}}\dirac{Z'_{n+1}}{\phi_{n}\circ\cdots\circ\phi_{m}}{Z_{m-1}}\bra{Z'_{-N}}} \\ &= \ \frac{\dirac{Z'_{n+1}}{\wh{O}}{Z_{m-1}}}{\dirac{Z'_{n+1}}{\phi_{n}\circ\cdots\circ\phi_{m}}{Z_{m-1}}}\quad ,
\end{aligned}
\end{equation}
whenever $O$ is a local observable on the spins in $[m,n]$. Since $Z_{m}=\phi_{m}\cdot Z_{m-1}$,
the normalization in the denominator is given by
\[
\dirac{Z'_{n+1}}{\phi_{n}\circ\cdots\circ\phi_{m}}{Z_{m-1}}\ =\ \left[\prod_{k=m}^{n}\tr[\, \phi_{k}(Z_{k-1}) \, ]\right]\diracip{Z'_{n+1}}{Z_{n}}\quad.
\]

As is well known, there is a ``gauge-freedom'' in the representation of a matrix product state such as $|\psi(N)\rangle$:  the state itself does not change under the replacement $A^i_k \mapsto V_{k-1}A^i_k V_k^{-1}$ provided we have $V_N=V_{-N}$. See \cite{perez2006matrix} for more discussion on this symmetry.  Choosing the matrices $V_k$ appropriately, one can impose a gauge fixing condition such as $ \sum_{i=1}^d A_i A_i^\dagger = \bbI,$
which would make the associated channel maps trace preserving.  \emph{A priori}, it appears that the matrices required for gauge fixing may depend on $N$ and it is not clear that they can be chosen consistently with the ergodic structure.  However, Theorem \ref{thm:main} allows us to do just that, as we now explain. To begin, let
\[
\xi_{m}=\tr [\phi_{m}^{*}\left(Z_{m+1}'\right)]\quad.
\]
By Theorem \ref{thm:main}, $(\xi_{m})_{m\in \bb{Z}}$  is a shift-covariant sequence ($\xi_{m;\omega}=\xi_{m-1;T\omega}$) of positive random variables.
Furthermore $
\phi_{m}^{*}(Z_{m+1}')=\xi_{m}Z_{m}'$.
Now let
\begin{equation}\label{eq:transformphi}
	\wt{\phi}_{m}(M)\ =\ \frac{1}{\xi_{m}}Z'^{\;1/2}_{m+1}\phi_{m}\left(Z'^{\;-1/2}_m \; M\;Z_{m}'^{-1/2}\right)Z_{m+1}'^{\;1/2}\quad;
\end{equation}
this expression is well defined since the matrices $Z_m'$ are full rank almost surely by Theorem \ref{thm:main}.  The maps $\wt{\phi}_m$ are an ergodic sequence of completely positive maps, and a short computation shows that they are trace preserving:
\begin{eqnarray*}
\tr [\wt{\phi}_{m}(M)] 
&=& \frac{1}{\xi_{m}}\tr\left[ Z_{m+1}'\phi_{m}\left(Z'^{\;-1/2}_{m}M\;Z'^{\;-1/2}_{m}\right) \right]= \frac{1}{\xi_{m}}\tr\left[ \phi_{m}^{*}(Z'_{m+1})\ Z'^{\;-1/2}_{m}\;M\;Z'^{\;-1/2}_{m}\right]\\
&=& \tr \left[Z_{m}'\;Z'^{\;-1/2}_{m}\;M\;Z'^{\;-1/2}_{m}\right] = \tr [M]\quad .
\end{eqnarray*}
Also,
\[
\wt{\phi}_{m}(M)\ =\ \sum_{i=1}^{d}{\wt{A}_{m}^{i\, \dagger} }\;M\;{\wt{A}_{m}^{i}}
\]
where $\wt{A}_{m}^{i}\ =\ \frac{1}{\sqrt{\xi_{m}}}Z'^{\;1/2}_{m+1}\;A_{m}^{i}\;Z'^{\;-1/2}_{m}$.

We could apply Theorem \ref{thm:main} directly to the sequence $\wt{\phi}_m$, since it is straightforward to see that these maps satisfy
conditions (1) and (2) of Lemma \ref{lem:Assequiv}.  
However, it is easier to simply write down the left and right matrices  $\wt{Z'}_{m}$ and $\wt{Z}_{m}$ directly using equation \eqref{eq:transformphi}:
\[
\wt{Z'}_{m}=\frac{1}{D}\mathbb{I}\quad\text{and}\quad\wt{Z}_{m}=\frac{1}{\tr [Z_{m+1}'Z_{m}]} \ Z'^{\;1/2}_{m+1}\;Z_{m}\;Z'^{\;1/2}_{m-1}\quad. \]
Since $\wt{\phi}_{m}$ is trace preserving we have $\wt{\phi}_{m}(\wt{Z}_{m-1})=\wt{\phi}_{m}\cdot\wt{Z}_{m-1}=\wt{Z}_{m}$ 
for all $m$ and $
\wt{\phi}_{m}^{*}(\mathbb{I})=\mathbb{I}$.

We now return to the expression for the thermodynamic limit $W(O)$.  Given an interval $[m,n]$, one may easily check that
\begin{multline*}
\dirac{Z'_{n+1}}{\phi_{n}\circ\cdots\circ\phi_{m}}{Z_{m-1}} \\ = \ (\xi_{n}\cdots\xi_{m})\tr[ Z_{m}'Z_{m-1}]\tr[\wt{\phi_{n}}\circ\cdots\circ\wt{\phi}_{m}(\wt{Z}_{m-1})]\ = \ (\xi_{n}\cdots\xi_{m})\tr [Z_{m}'Z_{m-1}] \quad .
\end{multline*}
For a local observable $O$ on the spins in $[m,n]$, we define analogous to   equation \eqref{eq:Ohatagain},
\begin{equation}\label{eq:Otilde}
\wt{O} (M)\ \equiv\ \sum_{\substack{i_{m},\ldots,i_{n}\\
j_{m},\ldots,j_{n}
}
}\dirac{i_{m},\ldots,i_{n}}{O}{j_{m},\ldots,j_{n}}\wt{A}_{n}^{i_{n}\: \dagger }\cdots\wt{A}_{m}^{i_{m}\: \dagger }\; M\; \wt{A}_{m}^{i_{m}}\cdots\wt{A}_{n}^{i_{n}} \quad .
\end{equation}
Inserting these definitions into   equation \eqref{eq:Wformula}, we find the following remarkably simple formula for the thermodynamic limit $W$ of the matrix product states:
\begin{equation}
\label{eq:Wformulatilde}
W(O)\ =\ \tr\left[\wt{O}(\wt{Z}_{m-1})\right] \quad .
\end{equation}

  Equation \eqref{eq:Wformulatilde} can be used to obtain a bound on the two-point correlation of two observables $O_1$ and $O_2$ located in disjoint intervals $I_{1}=[m_{1},n_{1}]$ and $I_{2}=[m_{2},n_{2}]$ with $n_1 < m_2$. For such observables
\[
W(O_{2}O_{1})\ =\ \tr\left[\wt{O}_{2}\circ\wt{\phi}_{m_{2}-1}\circ\cdots\circ\wt{\phi}_{n_{1}+1}\circ\wt{O}_{1}(\wt{Z}_{m_{1}-1})\right] \quad .
\]
Applying Theorem \ref{thm:bound} to $\wt{\phi}_{m}$ allows us to obtain the following
\begin{theorem}\label{thm:corelations} There is $0<\mu<1$ such
that for each $x\in\bbZ$ the following correlation inequality holds with $C_{\mu,x}<\infty$ almost surely:
\begin{equation}
\abs{W(O_{2}O_{1})-W(O_{2})W(O_{1})}\ \le\ C_{\mu,x}\norm{\wt{O}_{1}-W(O_{1})\wt{\Psi}_{1}}\norm{\wt{O}_{2}-W(O_{2})\wt{\Psi}_{2}}\mu^{m_{2}-n_{1}},\label{eq:corelations}
\end{equation}
whenever  $\supp[{O_{j}}]\in[m_j,n_j]$ and $\wt{\Psi}_{j}=\wt{\phi}_{n_{j}}\circ\cdots\circ \wt{\phi}_{m_{j}}$ for $j=1,2$ with
$n_{1}<x<m_{2}$. 
\end{theorem}

\subsection{Overview of the proofs} 
Theorems \ref{thm:main}, \ref{thm:bound}, and \ref{thm:corelations} are proved in \S\ref{sec:proofs} below, using several technical lemmas presented and proved in \S\ref{sec:technical}. 
The central idea of the proofs 
is contraction mapping argument for the maps $\Phi_N$ on $\bb{P}_D$.  This is accomplished in Lemma \ref{lem:contract}, facilitated by the introduction of a non-standard metric on the set of quantum states \textemdash \ see equation \eqref{eq:d} below.  The metric and a number of the ideas developed in \S\ref{sec:technical} are inspired by results in section 10 of Hennion's paper \cite{hennion1997limit}.  Although some of the statements are similar, the proofs in \cite{hennion1997limit} do not directly carry over to the present more general context. Nonetheless, for readers interested in comparing the two papers, we note the following correspondence between lemmas in the present paper and in \cite{hennion1997limit}:
\[ \text{Lemma }
\left \{ \begin{matrix}\text{\ref{lem:m}} \\ 
	\text{\ref{lem:isametric}} \\
	\text{\ref{lem:dformula}} \\
	\text{\ref{lem:metric}} \\
	\text{\ref{lem:homeo}}\\
	\text{\ref{lem:contract}}
	\end{matrix} \right \} \text{ generalizes Lemma }  \left \{ \begin{matrix}\text{10.1} \\ 
	\text{10.2} \\
	\text{10.3} \\
	\text{10.4} \\
	\text{10.5}\\
	\text{10.6}
\end{matrix} \right \} \text{ of \cite{hennion1997limit}.}
\]
Although we have taken inspiration from \cite{hennion1997limit}, we do not use any of the results therein directly and the present paper can be read on its own.

\section{Technical results}\label{sec:technical}
\subsection{Notation}\label{sec:notation}
Let
\[
\bb{S}_D=\setb{M\in\bb{P}_D}{\tr[M]=1}\ ,\quad 
\bb{S}_D^\circ=\setb{M\in \bb{P}_D^\circ}{\tr[M]=1} \ ,
\]
and 
$$\mc{P}_D=\setb{\phi \in \mc{L}(\bbM_D)}{\phi \text{ is a positive map, }\ker\phi\cap\bb{P}_D=\{0\}\text{, and }\ker\phi^{*}\cap\bb{P}_D=\{0\}.}.$$
Note that $\mc{P}_D$ is a convex set. Let $\mc{P}_D^\circ$ denote its interior, 
Since any strictly positive map satisfies the kernel condition in the definition of $\mc{P}_D$, we have
$$\mc{P}_D^{\circ}=\setb{\phi \in \mc{L}(\bbM_D)}{\phi \text{ is strictly positive.}}.$$
Assumption \ref{ass:main} ensures that $\Phi_N=\phi_{N}\circ\cdots \circ \phi_0 \in \mc{P}_D^\circ$ for large $N$, with probability one. 
Condition (2) of Lemma \ref{lem:Assequiv} states that $\phi_{0}\in \mc{P}_D$ almost
surely, while
condition (1)
states that $\Phi_{n_0}\in \mc{P}_D^\circ$
with positive probability for some positive integer  $n_0$.
Note that any $\phi\in \mc{P}_D$ maps $\bb{P}_D$ into $\bb{P}_D$, while $\phi\in \mc{P}^{\circ}_D$ maps $\bb{P}_D$ into $\bb{P}_D^\circ$.

Theorem \ref{thm:main} is formulated in terms of the projective action
\[
\phi\cdot M\ \equiv\ \frac{\phi(M)}{\tr[\phi(M)]}\quad,
\]
of a positive map on $\bb{S}_D$. Note that $\tr[\phi(M)]\neq0$
for $\phi\in \mc{P}_D$ and $M\in\bb{S}_D$, so this action is well defined.

\begin{lem}\label{lem:phiC}Let $\phi\in \mc{P}_D$ then $\phi$ maps $\bb{P}_D^\circ$
into $\bb{P}_D^\circ$. \end{lem}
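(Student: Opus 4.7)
The plan is to argue by contradiction. Suppose $M \in \bb{P}_D^\circ$ yet $\phi(M) \notin \bb{P}_D^\circ$. Since $\phi$ is positive we have $\phi(M) \in \bb{P}_D$, so the only way to fall outside the interior is for $\phi(M)$ to have a nontrivial kernel. I would pick a unit vector $\mb{z} \in \bbC^D$ annihilated by $\phi(M)$ and form the rank-one positive semidefinite matrix $N = \mb{z}\mb{z}^\dagger$, which lies in $\bb{P}_D \setminus \{0\}$.

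The key step is to transfer the vanishing of $\mb{z}^\dagger \phi(M) \mb{z}$ onto the adjoint side. I would rewrite this quantity as $\tr[N \phi(M)] = \tr[\phi^{*}(N) M]$, using the definition of $\phi^{*}$ given earlier in the paper. Before invoking Assumption \ref{ass:notransient}, one first needs that $\phi^{*}(N) \in \bb{P}_D$; this is the standard fact that positivity of $\phi$ implies positivity of $\phi^{*}$, which follows from the self-duality of the cone $\bb{P}_D$: for any $M' \in \bb{P}_D$, $\tr[\phi^{*}(N) M'] = \tr[N \phi(M')] \geq 0$ since both $N$ and $\phi(M')$ lie in $\bb{P}_D$.

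At this stage I have $\phi^{*}(N) \in \bb{P}_D$ and $\tr[\phi^{*}(N) M] = 0$ with $M$ strictly positive. The cleanup is short: writing $\tr[\phi^{*}(N) M] = \tr\bigl[M^{1/2} \phi^{*}(N) M^{1/2}\bigr]$ shows that the positive semidefinite matrix $M^{1/2} \phi^{*}(N) M^{1/2}$ has zero trace and therefore vanishes, which in turn forces $\phi^{*}(N) = 0$ since $M^{1/2}$ is invertible. But then $N$ is a nonzero element of $\bb{P}_D \cap \ker \phi^{*}$, contradicting the defining property of $\mc{P}_D$.

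I don't anticipate any serious obstacle; the proof is essentially a duality exercise whose substance is the observation that a positive $\phi$ pushing some $M \in \bb{P}_D^\circ$ outside the interior would necessarily have $\phi^{*}$ collapse a rank-one projection, which the condition $\ker \phi^{*} \cap \bb{P}_D = \{0\}$ built into $\mc{P}_D$ expressly forbids.
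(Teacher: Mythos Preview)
Your proof is correct and uses essentially the same duality idea as the paper: a nontrivial kernel for $\phi(M)$ would produce a nonzero $N\in\bb{P}_D$ with $\phi^*(N)=0$, contradicting the definition of $\mc{P}_D$. The only cosmetic difference is that the paper first treats the special case $M=\mathbb{I}$ (where $\tr[\phi^*(P)\,\mathbb{I}]=\tr[\phi^*(P)]=0$ immediately gives $\phi^*(P)=0$) and then bootstraps to general $M\in\bb{P}_D^\circ$ via $M\ge\delta\mathbb{I}$, whereas you handle arbitrary $M$ directly by conjugating with $M^{1/2}$; your route is slightly more streamlined but not substantively different.
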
 \begin{proof} We first show that $\phi(\mathbb{I})\in \bb{P}_D^\circ$.
Suppose on the contrary that $\phi(\mathbb{I})\in\bb{P}_D\setminus \bb{P}_D^\circ$.
Let $P$ denote the orthogonal projection onto the kernel of $\phi(\mathbb{I})$.
Then $
0=\tr [P\phi(\mathbb{I})]=\ \tr[\phi^{*}(P)]$, 
so $\phi^{*}(P)=0$, contradicting the definition of $\mc{P}_D$. Thus $\phi(\mathbb{I})\in \bb{P}_D^\circ$. Now let $M$ be any point of $\bb{P}_D^\circ$ and let $\delta>0$ such that $M\ge\delta \mathbb{I}$.
Then $\phi(M)\ge\delta\phi(\mathbb{I})$, so $\phi(M)\in \bb{P}_D^\circ.$ \end{proof}

The sets $\mc{P}_D,\mc{P}_D^{\circ}$ are semi-groups under composition; it follows
from Lemma \ref{lem:phiC} that $\mc{P}_D^{\circ}$ is a two-sided ideal of $\mc{P}_D:$
\begin{cor}\label{cor:ideal} Given $\phi\in \mc{P}_D$ and $\phi'\in \mc{P}_D^{\circ}$,
we have $\phi\circ\phi'\in \mc{P}_D^{\circ}$ and $\phi'\circ\phi\in \mc{P}_D^{\circ}$.  \end{cor}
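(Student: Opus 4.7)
The plan is to dispatch each of the two claims by a direct unwinding of definitions, using Lemma \ref{lem:phiC} for the first and the kernel condition built into the definition of $\mc{P}_D$ for the second. In both cases I take an arbitrary nonzero $M\in \bb{P}_D$ and track it through the two maps, showing the image lands in $\bb{P}_D^\circ$.

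For $\phi\circ\phi'\in \mc{P}_D^\circ$: since $\phi'\in \mc{P}_D^\circ$ is strictly positive, $\phi'(M)\in \bb{P}_D^\circ$ for every nonzero $M\in \bb{P}_D$. Lemma \ref{lem:phiC} then gives $\phi(\phi'(M))\in \bb{P}_D^\circ$, because $\phi\in \mc{P}_D$ sends $\bb{P}_D^\circ$ into $\bb{P}_D^\circ$. Thus $\phi\circ\phi'$ is strictly positive, i.e., lies in $\mc{P}_D^\circ$.

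For $\phi'\circ\phi\in \mc{P}_D^\circ$: first note that if $M\in \bb{P}_D$ with $M\neq 0$, then $\phi(M)\in \bb{P}_D$ by positivity of $\phi$, and $\phi(M)\neq 0$ because $\ker\phi\cap \bb{P}_D=\{0\}$ from the definition of $\mc{P}_D$. Applying $\phi'$ to the nonzero element $\phi(M)\in \bb{P}_D$ and using that $\phi'$ is strictly positive yields $\phi'(\phi(M))\in \bb{P}_D^\circ$, so $\phi'\circ\phi\in \mc{P}_D^\circ$ as well.

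The only subtle point — and really the only place any content beyond definitions enters — is needing $\phi(M)\neq 0$ in the second case, which is exactly what the kernel condition $\ker\phi\cap \bb{P}_D=\{0\}$ in the definition of $\mc{P}_D$ supplies. Everything else is a direct chase, and no new estimates or constructions are required; this is why the result is stated as a corollary rather than a lemma.
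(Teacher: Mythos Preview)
Your proof is correct and follows essentially the same approach as the paper's: both use Lemma~\ref{lem:phiC} for $\phi\circ\phi'$ and the strict positivity of $\phi'$ applied to $\phi(M)\in\bb{P}_D$ for $\phi'\circ\phi$. If anything you are slightly more explicit, since you spell out that the kernel condition $\ker\phi\cap\bb{P}_D=\{0\}$ is what guarantees $\phi(M)\neq 0$, whereas the paper absorbs this into the informal set inclusion $\phi'(\bb{P}_D)\subset\bb{P}_D^\circ$.
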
 \begin{proof}
We have $\phi'\circ\phi$  in $\mc{P}_D^{\circ}$, since $\phi'\circ\phi(\bb{P}_D)\subset\phi'(\bb{P}_D) \subset \bb{P}_D^\circ.$
On the other hand, for any $M\in\bb{P}_D$ we have $\phi'(M)\in \bb{P}_D^\circ$
and thus $\phi\circ\phi'(M)\in \bb{P}_D^\circ$, by Lemma \ref{lem:phiC}.\end{proof} 

\subsection{Geometry of $\bb{S}_D$} 
The set $\bb{S}_D$ of density matrices is convex and compact. To implement the contraction argument at the heart of the proof of Theorem \ref{thm:main}, it is useful to introduce a special metric on this space based on the following quantity:
\begin{equation}\label{eq:m}
m(X,Y)=\sup\setb{\lambda}{\lambda Y\le X} \quad ,
\end{equation}
for $X,Y\in \bb{S}_D$.
\begin{lem}\label{lem:m} Let $X,Y \in \bbS_D$.  Then
\begin{equation}\label{eq:variationalm}
\begin{aligned}
m(X,Y) \ &= \ \min \setb{\frac{\tr [AX]}{\tr[AY]}}{A \in \bbS_D \text{ and } \tr[AY]\neq 0} \\
&= \ \inf \setb{\frac{\tr [AX]}{\tr[AY]}}{A \in \bbS_D^\circ} \ .
\end{aligned}
\end{equation}
Furthermore, if $Z\in\bb{S}_D$, Then 
\begin{enumerate}
\item $0\le m(X,Y)\le1$ 
\item $m(X,Z)m(Z,Y)\le m(X,Y)$ 
\item $m(X,Y)m(Y,X)=1$ if and only if $X=Y$ 
\item $m(X,Y)=0$ if and only if $Y\vec{v}\neq0$ for some $\vec{v}\in\ker X$.
In particular, $m(X,Y)>0$ if $X\in  \bb{S}_D^\circ$. 
\end{enumerate}
\end{lem} 
\begin{proof} If $\lambda Y \le X$, then $\lambda \tr[AY] \le \tr[AX]$ and $\lambda \le \frac{\tr[AX]}{\tr[AY]}$ if $A\in \bbS_D$ and $\tr[AY]\neq 0$. Thus $$m(X,Y)\ \le \ \inf \setb{\frac{\tr[AX]}{\tr[AY]}}{A\in \bbS_D \text{ and } \tr[AY]\neq 0} \ . $$  
To see that the infimum is attained and is equal to $m(X,Y)$, note that if $\lambda = m(X,Y)$ then $0$ must be an eigenvalue of $X-\lambda Y$ with an eigenvector $\vec{u}$ such that $Y\vec{u}\neq 0$ (else we could increase $\lambda$ by a small amount without violating $\lambda Y \le X$).  Let $A= \vec{u}\ipc{\vec{u}}{\cdot}$.  Then $\tr[AX]=m(X,Y)\tr[AY]$ and $\tr[AY]\neq 0$.

To see that we still obtain $m(X,Y)$ if we restrict the infimum to range over $A\in \bbS_D^\circ$, note that if $\lambda Y \not \le X $, then we must have $\ipc{\vec{u}}{X\vec{u}}< \lambda \ipc{\vec{u}}{Y\vec{u}} < 0$ for some $\vec{u}$.  Since $X$ is positive, it follows that $\ipc{\vec{u}}{Y\vec{u}} \neq 0$ and $\lambda > \frac{\ipc{\vec{u}}{X\vec{u}}}{\ipc{\vec{u}}{Y\vec{u}}}$. Taking $M=\vec{u}\ipc{\vec{u}}{\cdot} + \delta \bbI$ for small enough $\delta$ we see that $\lambda > \inf\{\frac{\tr[AX]}{\tr[AY]}  :  A \in \bbS_D^\circ\}$.  Thus 
$$m(X,Y) \ = \ \inf\setb{\frac{\tr[AX]}{\tr[AY]}}{ A \in \bbS_D^\circ} \ . $$

The lower bound in part 1 is clear. To see
the upper bound note that $\frac{\tr[\bbI X]}{\tr[\bbI Y]} = 1$. For part 2, note that 
if $\lambda Z\le X$ and $\mu Y\le Z$, then $\lambda\mu Y\le X$.
For part 3, note that if $m(X,Y)m(Y,X)=1$ then $m(X,Y)=m(Y,X)=1$ so $X\le Y$
and $Y\le X$.  Finally, for part 4, note that if $Y\vec{v}\neq0$ and $X\vec{v}=0$
then $\lambda\ipc{\vec{v}}{Y\vec{v}}>0=\ipc{\vec{v}}{X\vec{v}}$ for
any $\lambda>0$. Conversely, if $Y\vec{v}=0$ for any $\vec{v}\in\ker X$,
then $Y$ is reduced by the subspace decomposition $\ker X\oplus\ran X$,
and with respect to this decomposition
\[
X=\begin{pmatrix}0 & 0\\
0 & X'
\end{pmatrix}\quad \text{and} \quad Y=\begin{pmatrix}0 & 0\\
0 & Y'
\end{pmatrix}\quad,
\]
where $X'$, $Y'$ are operators on $\ran X$. Furthermore $\ker X'=\{0\}$,
so $X'\ge\delta\mathbb{I}$ for some $\delta>0$. It follows that $\lambda Y'\le X'$
for small $\lambda >0$. Then $\lambda Y\le X$, so $m(X,Y)>0$.
\end{proof}

\begin{cor} $d_0(X,Y):= -\log m(X,Y)-\log m(Y,X)$ is a metric on $\bb{S}_D.$
\end{cor}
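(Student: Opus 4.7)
The plan is to verify the four defining axioms of a metric (non-negativity, identity of indiscernibles, symmetry, and the triangle inequality) by reading them off directly from the properties of $m(\cdot,\cdot)$ established in Lemma \ref{lem:m}. The treatment is essentially formal; the work was already done in that lemma. Since $m$ may vanish, I should read $d_0$ as an extended metric taking values in $[0,+\infty]$, with the understanding that $d_0(X,Y)=+\infty$ whenever either $m(X,Y)=0$ or $m(Y,X)=0$.

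First, I would verify non-negativity and symmetry. Non-negativity follows immediately from part (1) of Lemma \ref{lem:m}, which gives $0\le m(X,Y)\le 1$ and hence $-\log m(X,Y)\ge 0$; symmetry is evident from the symmetric definition $d_0(X,Y)=-\log m(X,Y)-\log m(Y,X)$.

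Next I would handle the identity of indiscernibles. Observe that
\[
d_0(X,Y)\ =\ -\log\bigl[m(X,Y)\,m(Y,X)\bigr],
\]
and by part (1) this product lies in $[0,1]$, so $d_0(X,Y)=0$ is equivalent to $m(X,Y)\,m(Y,X)=1$, which by part (3) of Lemma \ref{lem:m} is equivalent to $X=Y$.

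Finally, I would establish the triangle inequality using part (2) of Lemma \ref{lem:m}, which states $m(X,Z)\,m(Z,Y)\le m(X,Y)$. Taking logarithms and negating gives $-\log m(X,Y)\le -\log m(X,Z)-\log m(Z,Y)$. Applying the same inequality with the arguments reversed yields $-\log m(Y,X)\le -\log m(Y,Z)-\log m(Z,X)$, and summing the two bounds produces
\[
d_0(X,Y)\ \le\ d_0(X,Z)+d_0(Z,Y),
\]
which is the desired triangle inequality. There is no real obstacle here; the only subtlety worth flagging in the writeup is the convention regarding infinite values, which is automatically compatible with all four axioms since both sides of each inequality remain meaningful in $[0,+\infty]$.
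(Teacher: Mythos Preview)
Your proposal is correct and is exactly the intended argument: the paper states this as a corollary without proof, since all four metric axioms are immediate consequences of parts (1)--(3) of Lemma \ref{lem:m}, just as you verify. Your remark that $d_0$ is to be read as an extended metric taking values in $[0,+\infty]$ is also in line with the paper, which notes immediately afterward that $d_0$ ``is unbounded and takes the value $\infty$.''
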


The metric $d_0(X,Y)$ is slightly unpleasant; it is unbounded and takes
the value $\infty$. A much nicer metric is given by
\begin{equation}\label{eq:d}
d(X,Y)=\frac{1-m(X,Y)m(Y,X)}{1+m(X,Y)m(Y,X)}\quad.
\end{equation}
\begin{lem}\label{lem:isametric} $d$ is a metric on $\bb{S}_D$
such that 
\begin{enumerate}
\item $\sup\setb{d(X,Y)}{X,Y\in\bb{S}_D}=1$, and 
\item if $X\in \bb{S}_D^\circ$ and $Y\in\bb{S}_D$, then $d(X,Y)=1$ if and only
if $Y\in\bb{S}_D\setminus  \bb{S}_D^\circ$. 
\end{enumerate}
\end{lem}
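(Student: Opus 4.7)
The strategy is to reduce every part of the lemma to properties of $m$ already established in Lemma \ref{lem:m}.  Set $p(X,Y) := m(X,Y)\,m(Y,X)$, so that $d(X,Y) = f(p(X,Y))$ where $f(t) := (1-t)/(1+t)$.  By Lemma \ref{lem:m}(1), $p(X,Y)\in[0,1]$, and since $f$ is strictly decreasing with $f(0)=1$ and $f(1)=0$, this already gives $d\in[0,1]$.  Symmetry $d(X,Y)=d(Y,X)$ is built into $p$, and Lemma \ref{lem:m}(3) yields $d(X,Y)=0 \iff p(X,Y)=1 \iff X=Y$.  What remains is the triangle inequality together with claims (1)--(2).

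For the triangle inequality, my plan is to apply Lemma \ref{lem:m}(2) twice, to the triples $(X,Z,Y)$ and $(Y,Z,X)$, and multiply the two inequalities to get the submultiplicativity
\[ p(X,Y) \ \ge \ p(X,Z)\,p(Z,Y) . \]
Since $f$ is decreasing this gives $d(X,Y) \le f(p(X,Z)\,p(Z,Y))$, so everything reduces to the scalar inequality
\[ f(uv) \ \le \ f(u) + f(v) \qquad \text{for all } u,v \in [0,1]. \]
I would prove this by clearing denominators and reducing it to the manifestly true inequality $(1-u)(1-v)(1-uv) \ge 0$.  A more conceptual alternative is to observe that the substitution $u=e^{-2a}$, $v=e^{-2b}$ turns $f$ into $\tanh$ and the inequality into the subadditivity $\tanh(a+b) \le \tanh a + \tanh b$ on $[0,\infty)$, which follows from concavity of $\tanh$ with $\tanh(0)=0$.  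I expect the algebraic route to be cleaner in practice.

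For claim (1), the upper bound $d\le 1$ is already noted, and $d(X,Y)=1$ is equivalent to $p(X,Y)=0$, i.e.\ $m(X,Y)=0$ or $m(Y,X)=0$.  By Lemma \ref{lem:m}(4), this is realized as soon as the kernels of $X$ and $Y$ are incomparable---for instance, taking $X$ and $Y$ to be rank-one projections onto orthogonal unit vectors.  For claim (2), the hypothesis $X\in \bb{S}_D^\circ$ forces $\ker X=\{0\}$, so Lemma \ref{lem:m}(4) guarantees $m(X,Y)>0$.  Hence $d(X,Y)=1$ iff $m(Y,X)=0$, which by Lemma \ref{lem:m}(4) applied with the roles of $X$ and $Y$ reversed means some $\mb{v}\in\ker Y$ has $X\mb{v}\ne 0$; by strict positivity of $X$ this is exactly $\ker Y \ne \{0\}$, i.e.\ $Y\in \bb{S}_D\setminus \bb{S}_D^\circ$.

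The only real obstacle will be the triangle inequality, and within it the scalar inequality $f(uv)\le f(u)+f(v)$ on $[0,1]^2$.  Once that is settled, the triangle inequality follows immediately by combining the submultiplicativity of $p$ with monotonicity of $f$, and claims (1)--(2) reduce to direct inspection of when $m$ vanishes via Lemma \ref{lem:m}(4).
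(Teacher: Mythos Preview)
Your proposal is correct and follows essentially the same route as the paper: both reduce the triangle inequality to the scalar fact $f(st)\le f(s)+f(t)$ for $f(t)=(1-t)/(1+t)$ on $[0,1]$, combined with the submultiplicativity of $p(X,Y)=m(X,Y)m(Y,X)$ from Lemma~\ref{lem:m}(2), and both dispatch (1)--(2) by invoking Lemma~\ref{lem:m}(4) to identify when $m$ vanishes. The only cosmetic differences are that the paper verifies $f(s)+f(t)\ge f(st)$ by rewriting the left side as $\tfrac{2}{1+(s+t)/(1+st)}\cdot\tfrac{1-st}{1+st}$ and bounding $(s+t)/(1+st)\le 1$, whereas you clear denominators to reach $(1-u)(1-v)(1-uv)\ge 0$; and the paper's witness for diameter $1$ takes $X\in\bb{S}_D^\circ$, $Y\notin\bb{S}_D^\circ$ (which simultaneously handles item~(2)), while you use orthogonal rank-one projections.
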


\begin{proof} Symmetry of $d$ is clear. Furthermore, $0\le d(X,Y)\le1$
and $d(X,Y)=0$ if and only if $m(X,Y)m(Y,X)=1$, which holds if and
only if $X=Y$ by Lemma \ref{lem:m}. To prove the triangle inequality, let
\[
f(s)=\frac{1-s}{1+s}=-1+\frac{2}{1+s}\quad
\]
for $0\le s\le 1$.
Then $f$ is decreasing and
\[
f(s)+f(t)\ =\ \frac{2-2st}{1+s+t+st}\ =\ 2\frac{1+st}{1+s+t+st}\frac{1-st}{1+st} = 2\frac{1}{1+\frac{s+t}{1+st}}\frac{1-st}{1+st}\quad.
\]
The maximum of $\frac{s+t}{1+st}$ over $s,t\in[0,1]$ is $1$, from
which it follows that $f(s)+f(t)\ge f(st)$. The triangle inequality
for $d$ follows from this inequality and part 2 of Lemma \ref{lem:m}.

To prove that the diameter of $\bb{S}_D$ is $1$ as claimed, we simply need to  find $X,Y\in\bb{S}_D$
with $m(X,Y)=0$. This holds, for instance, if $X\in  \bb{S}_D^\circ$ and $\ker Y\neq\{0\}$, which also leads to the result noted in item 2. \end{proof}

\begin{lem}\label{lem:dformula} Let $X,Y\in\bb{S}_D$
with $X\neq Y$. Then
\begin{equation}\label{eq:formulaford}
	d(X,Y)\ =\ \frac{\abs{u_{1}v_{2}-u_{2}v_{1}}}{u_{1}v_{2}+u_{2}v_{1}}\quad ,
\end{equation}
where $X=u_{1}A_{-}+u_{2}A_{+}$
and $Y=v_{1}A_{-}+v_{2}A_{+}$ with $A_{\pm}$  the endpoints of the intersection of $\bb{S}_D$ with the line through $X$ and $Y$.  
\end{lem}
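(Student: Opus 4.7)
The plan is to reduce the computation of $d(X,Y)$ to a direct one-dimensional calculation of $m(X,Y)$ and $m(Y,X)$ along the line containing $A_-$, $X$, $Y$, $A_+$. First I would parametrize this affine line by $L(r) = (1-r)A_- + rA_+$ for $r \in \bb{R}$. Since $A_\pm \in \bb{S}_D$ both have trace $1$, every $L(r)$ has trace $1$. By the defining property of $A_\pm$ as the endpoints of the intersection of $\bb{S}_D$ with this line, the segment $\{L(r) : r \in [0,1]\}$ coincides with $\bb{S}_D \cap L$, and for $r \notin [0,1]$ the matrix $L(r)$ has trace $1$ but fails to be positive semi-definite. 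In particular, $L(r) \in \bb{P}_D$ if and only if $r \in [0,1]$.

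With this setup in hand, I would compute $m(X,Y)$ directly. For $\lambda \in [0,1)$, the matrix $X - \lambda Y$ has trace $1 - \lambda > 0$, and its normalization $(1-\lambda)^{-1}(X - \lambda Y)$ lies on $L$ at parameter
\[
r \ = \ \frac{u_2 - \lambda v_2}{1 - \lambda}, \qquad 1 - r \ =\ \frac{u_1 - \lambda v_1}{1 - \lambda}.
\]
By the previous step, $X - \lambda Y \in \bb{P}_D$ if and only if both numerators are nonnegative, i.e., $\lambda \le u_1/v_1$ and $\lambda \le u_2/v_2$ (with the convention $u/0 = +\infty$). Values $\lambda > 1$ are excluded because the trace would be negative and $\lambda = 1$ is excluded because $X \neq Y$. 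This yields $m(X,Y) = \min(u_1/v_1,\; u_2/v_2)$, and by symmetry $m(Y,X) = \min(v_1/u_1,\; v_2/u_2)$.

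A brief case analysis then finishes the proof. If $u_1 v_2 \le u_2 v_1$, then $m(X,Y) = u_1/v_1$ and $m(Y,X) = v_2/u_2$, giving $m(X,Y)\, m(Y,X) = u_1 v_2/(u_2 v_1)$; the opposite case is symmetric. In all cases
\[
m(X,Y)\, m(Y,X) \ = \ \frac{\min(u_1 v_2,\; u_2 v_1)}{\max(u_1 v_2,\; u_2 v_1)},
\]
and substituting this into the definition \eqref{eq:d} of $d$ produces the desired formula $(u_1v_2+u_2v_1)^{-1}|u_1v_2 - u_2v_1|$.

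There is no substantive obstacle here; the only delicate point is bookkeeping at the degenerate cases in which $X$ or $Y$ coincides with one of the endpoints $A_\pm$, so that some $u_i$ or $v_i$ vanishes. These are handled uniformly by the convention $u/0 = +\infty$, and they lead to $d(X,Y) = 1$, in agreement with Lemma \ref{lem:isametric}(2).
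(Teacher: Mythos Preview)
Your proof is correct and somewhat cleaner than the paper's. Both arguments reduce to computing $m(X,Y)=\min(u_1/v_1,\,u_2/v_2)$ along the line through $X$ and $Y$, but they establish the upper bound on $m(X,Y)$ differently. The paper first proves that each $A_\pm$ has a nontrivial kernel and that $\ker A_-\not\subset\ker A_+$ and vice versa; it then chooses test vectors $\vec w_\pm\in\ker A_\pm\setminus\ker A_\mp$ and applies them to the inequality $m(X,Y)Y\le X$ to force $m(X,Y)\le u_i/v_i$. You bypass this spectral step entirely by using the defining property of $A_\pm$ as the endpoints of $\bb{S}_D\cap L$: since $L(r)\in\bb{P}_D$ exactly for $r\in[0,1]$, the normalized matrix $(1-\lambda)^{-1}(X-\lambda Y)$ is positive semi-definite precisely when both coordinates $u_i-\lambda v_i$ are nonnegative. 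This gives both bounds at once, handles the boundary cases uniformly via the $u/0=+\infty$ convention, and avoids the separate kernel analysis and the case split on $t_+=1$ or $t_-=0$ that the paper carries out. The paper's route does yield the structural side fact that the kernels of $A_\pm$ are non-nested, but that fact is not used elsewhere.
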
 \begin{rem} 
Since $X$ and $Y$ lie on the segment connecting
$A_{\pm}$ we have $u_{1}+u_{2}=1$ and $v_{1}+v_{2}=1$. Thus we
have
\[
d(X,Y)\ =\ \frac{\abs{u_{1}-v_{1}}}{u_{1}+v_{1}-2u_{1}v_{1}}\ =\ \frac{\abs{u_{2}-v_{2}}}{u_{2}+v_{2}-2u_{2}v_{2}}\quad.
\]
\end{rem} \begin{proof} Let $t_{+}$ and $t_{-}$ be the largest and smallest
real numbers such that $tX+(1-t)Y\in\bb{S}_D$. Note that $t_{-}\le0\le1\le t_{+}$ and $A_{\pm}=t_{\pm}X+(1-t_{\pm})Y$. Furthermore
\[
u_1 = \frac{t_{+}-1}{t_{+}-t_{-}} \ , \quad u_2= \frac{1-t_{-}}{t_{+}-t_{-}}\ , \quad v_1 = \frac{t_{+}}{t_{+}-t_{-}} \ , \quad \text{and} \quad v_2 = \frac{-t_{-}}{t_{+}-t_{-}} \quad ,
\]
so equation \eqref{eq:formulaford} is equivalent to
\begin{equation}\label{eq:formulafordwitht}
	d(X,Y)\ =\  \frac{t_{+}-t_{-}}{t_{-}+t_{+}-2t_{-}t_{+}}\quad.
\end{equation}

Note that each $A_{\pm}$ must have a non-trivial
kernel. For example, if $A_{-}$ were positive-definite, then 
$A_{-}-\delta(X-Y)$ would be positive definite for small $\delta$, contradicting
the minimality of $t_{-}$. A similar argument applies to $A_+$. Furthermore we must have $\ker A_{-}\not \subset \ker A_{+}$
and $\ker A_{+}\not \subset \ker A_{-}.$ Indeed, suppose that
$\ker A_{+}\subset\ker A_{-}$. Then we would have $A_{+}-\delta A_{-}\ge0$
for small $\delta$, contradicting the maximality of $t_{+}$.  The proof that 
$\ker A_{-}\not \subset \ker A_{+}$ is similar.

Suppose that $t_{+}=1$. Then $X=A_{+}$  and $tX+(1-t)Y$ is not
positive definite whenever $t>1$, i.e., $X-\lambda Y$ is not positive
definite for any $\lambda>0$. It follows that $m(X,Y)=0$ and thus $d(X,Y)=1$, so equation \eqref{eq:formulafordwitht} holds. Similarly, if $t_{-}=0$ then $Y=A_{-}$, $d(X,Y)=1$, and equation \eqref{eq:formulafordwitht} holds.

Now suppose that $t_+>1$ and $t_-<0$. Then $X$ and
$Y$ are in the interior of the interval connecting $A_{-}$ and $A_{+}$.
Let $r=\min\setb{u_{i}/v_{i}}{i=1,2}$. Then
\[
rY=r(v_{1}A_{-}+v_{2}A_{+})\le u_{1}A_{-}+u_{2}A_{+}=X\quad.
\]
Thus $r\le m(X,Y).$ On the other hand
\[
m(X,Y)(v_{1}A_{-}+v_{2}A_{+})  =  m(X,Y)Y\le X=u_{1}A_{-}+u_{2}A_{+}\quad.
\]
Let $\vec{w}_{+}\in\ker A_{+}\setminus \ker A_{-}$. Then
$
m(X,Y)v_{1}\ipc{\vec{w}_{+}}{A_{-}\vec{w}_{+}} \le u_{1}\ipc{\vec{w}_{+}}{A_{-}\vec{w}_{+}}$.
Thus $m(X,Y)\le u_{1}/v_{1}$. Similarly, working with
$w_{-}\in\ker A_{-}\setminus A_{+}$ we find that $m(X,Y)\le u_{2}/v_{2}$
so that $m(X,Y)\le r$. Thus $m(X,Y)=r$. Similarly, $m(Y,X)=\min\setb{v_{i}/u_{i}}{i=1,2}.$
Thus,
\[
m(X,Y)m(Y,X)\ =\ \min\set{\frac{u_{1}}{v_{1}}\frac{v_{2}}{u_{2}},\frac{u_{2}}{v_{2}}\frac{v_{1}}{u_{1}}} \quad ,
\]
from which equation \eqref{eq:formulaford} follows. \end{proof}

\begin{lem} \label{lem:metric} Let
$X,Y\in\bb{S}_D$, then
$
d(X,Y)\ge\frac{1}{2}\tr\abs{X-Y}.
$
\end{lem}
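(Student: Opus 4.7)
My plan is to use the explicit formula for $d(X,Y)$ from Lemma \ref{lem:dformula} together with a direct estimate on the trace norm of $X-Y$. The case $X=Y$ is trivial, so assume $X\neq Y$. Writing $X = u_1 A_- + u_2 A_+$ and $Y = v_1 A_- + v_2 A_+$ with $A_\pm$ the endpoints of $\mathbb{S}_D$ along the line through $X$ and $Y$, and recalling that $u_1+u_2 = v_1+v_2 = 1$, I immediately get
$$X - Y \ =\ (u_1 - v_1)(A_- - A_+)\quad,$$
so $\tr|X-Y| = |u_1 - v_1| \cdot \tr|A_- - A_+|$.

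The first key step is to bound $\tr|A_- - A_+| \le 2$. Writing $A_- - A_+ = P - N$ with $P, N \ge 0$ having orthogonal support (the Jordan decomposition), then $P \le A_-$ on the support of $P$ (and vanishes elsewhere), hence $\tr P \le \tr A_- = 1$, and symmetrically $\tr N \le 1$. So $\tr|A_- - A_+| = \tr P + \tr N \le 2$, which yields
$$\tfrac{1}{2} \tr|X - Y|\ \le\ |u_1 - v_1|\quad.$$

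The second key step is to rewrite $d(X,Y)$ using the remark following Lemma \ref{lem:dformula}: since $u_2 = 1 - u_1$ and $v_2 = 1 - v_1$, the formula $d(X,Y) = |u_1 v_2 - u_2 v_1|/(u_1 v_2 + u_2 v_1)$ simplifies to
$$d(X,Y)\ =\ \frac{|u_1 - v_1|}{u_1 + v_1 - 2 u_1 v_1}\ =\ \frac{|u_1 - v_1|}{u_1 v_2 + u_2 v_1}\quad.$$
The denominator satisfies $u_1 v_2 + u_2 v_1 \le v_2 + v_1 = 1$, since $u_1, u_2 \in [0,1]$ (both $X$ and $Y$ lie on the segment between $A_-$ and $A_+$). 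Therefore $d(X,Y) \ge |u_1 - v_1| \ge \tfrac{1}{2} \tr|X-Y|$, completing the proof.

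I do not anticipate a serious obstacle here: the only nontrivial pieces are the bound $\tr|A_- - A_+| \le 2$ (which is elementary once one uses the Jordan decomposition and that $A_\pm$ are states) and the algebraic identity turning the formula of Lemma \ref{lem:dformula} into a form whose denominator is manifestly at most one. Both steps are short and rely only on the positivity and unit-trace of $A_\pm$ together with the convex-combination coefficients being in $[0,1]$.
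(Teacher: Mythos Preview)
Your proof is correct and follows essentially the same route as the paper: both use the representation $X-Y=(u_1-v_1)(A_--A_+)$ from Lemma~\ref{lem:dformula}, bound the denominator $u_1v_2+u_2v_1\le 1$, and show $\tr|A_--A_+|\le 2$. The only cosmetic difference is that the paper obtains the last bound via the operator sandwich $-(A_-+A_+)\le A_--A_+\le A_-+A_+$, whereas you use the Jordan decomposition (one could also just invoke the triangle inequality for $\|\cdot\|_1$ directly, since $A_\pm$ are states).
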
 \begin{proof} Based on the remark following Lemma \ref{lem:dformula}, we have
\[
d(X,Y)=\frac{\abs{u_{1}-v_{1}}}{u_{1}+v_{1}-2u_{1}v_{1}}\ge|u_{1}-v_{1}|\quad,
\]
where $X=u_{1}A_{-}+u_{2}A_{+}$ and $Y=v_{1}A_{-}+v_{2}A_{+}$ with
$A_{\pm}$ as in Lemma \ref{lem:dformula}. Since $u_{2}=1-u_{1}$ and $v_{2}=1-v_{1}$, we have $X-Y=(u_{1}-v_{1})(A_{-}-A_{+})$.
Thus
\[
-\abs{u_{1}-v_{1}}(A_{-}+A_{+})\le X-Y\le\abs{u_{1}-v_{1}}(A_{-}+A_{+})\quad,
\]
so
$\tr\abs{X-Y} \le 2\abs{u_{1}-v_{1}}$.
\end{proof}

\begin{lem}\label{lem:homeo} Let $d_1(X,Y)=\tr\abs{X-Y}$ denote the trace norm metric 
on $\bb{S}_D$. Let $Y\in  \bb{S}_D^\circ$, $X\in\bb{S}_D$ and let
$X_{n}$ be a sequence in $\bb{S}_D$ such that $\lim_{n}d_{1}(X_{n},X)=0$.
Then $\lim_{n}d(X_{n},Y)=d(X,Y).$ In particular, the spaces $( \bb{S}_D^\circ,d)$
and $( \bb{S}_D^\circ,d_{1})$ are homeomorphic. \end{lem}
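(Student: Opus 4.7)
The plan is to reduce everything to continuity of eigenvalues of the conjugated matrix $Y^{-1/2} X Y^{-1/2}$, which is well defined precisely because $Y \in \bb{S}_D^\circ$ is strictly positive. Once I have continuity of $X \mapsto m(X,Y)$ and $X \mapsto m(Y,X)$ on $\bb{S}_D$ in the trace-norm topology, the continuity of $X \mapsto d(X,Y)$ follows immediately from the algebraic definition of $d$ in equation \eqref{eq:d}.

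The key identifications I would use are the following. For $Y \in \bb{S}_D^\circ$, the inequality $\lambda Y \le X$ is equivalent to $\lambda \bbI \le Y^{-1/2} X Y^{-1/2}$, so
\[
m(X,Y) \ = \ \lambda_{\min}\bigl(Y^{-1/2} X Y^{-1/2}\bigr).
\]
Similarly $\lambda X \le Y$ is equivalent to $\lambda\, Y^{-1/2} X Y^{-1/2} \le \bbI$, and since $\tr X = 1$ forces $\lambda_{\max}(Y^{-1/2} X Y^{-1/2}) > 0$, this yields
\[
m(Y,X) \ = \ \frac{1}{\lambda_{\max}\bigl(Y^{-1/2} X Y^{-1/2}\bigr)}.
\]

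Next I would use that in finite dimensions the trace-norm and operator-norm topologies on $\bbM_D$ coincide, so $d_1(X_n, X) \to 0$ implies $Y^{-1/2} X_n Y^{-1/2} \to Y^{-1/2} X Y^{-1/2}$ in operator norm. Weyl's perturbation inequality then yields continuity of $\lambda_{\min}$ and $\lambda_{\max}$ under this convergence, and strict positivity of $\lambda_{\max}$ at the limit is preserved for large $n$, so the reciprocal is continuous as well. Combining these facts, $m(X_n, Y)\, m(Y, X_n) \to m(X,Y)\, m(Y, X)$, and inserting this into equation \eqref{eq:d} yields $d(X_n, Y) \to d(X, Y)$.

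For the homeomorphism claim, one direction follows immediately from Lemma \ref{lem:metric}, which gives $d_1(X, X') \le 2 d(X, X')$ and hence shows that $d$-convergence implies $d_1$-convergence on $\bb{S}_D^\circ$. The converse is a direct application of the continuity result just proved with $Y := X$, since $X \in \bb{S}_D^\circ$ and $d(X,X) = 0$. I do not anticipate any substantive obstacle; the essential technical input is just the existence of $Y^{-1/2}$, which is exactly why the hypothesis $Y \in \bb{S}_D^\circ$ is needed—an attempt to prove continuity at a merely positive semidefinite $Y$ would fail because directions in $\ker Y$ can produce discontinuous jumps in $m(\cdot,Y)$.
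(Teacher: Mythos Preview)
Your argument is correct, and it takes a genuinely different route from the paper. You exploit the hypothesis $Y\in\bb{S}_D^\circ$ to rewrite $m(X,Y)=\lambda_{\min}\bigl(Y^{-1/2}XY^{-1/2}\bigr)$ and $m(Y,X)=1/\lambda_{\max}\bigl(Y^{-1/2}XY^{-1/2}\bigr)$, and then invoke continuity of eigenvalues (Weyl) plus equivalence of norms in finite dimension. The paper instead argues directly from the order-theoretic definition of $m$: using $Y\ge\delta\bbI$ and the sandwich $X-\epsilon\bbI\le X_n\le X+\epsilon\bbI$ valid for large $n$, it shows by elementary manipulations that $\liminf$ and $\limsup$ of $m(X_n,Y)$ and $m(Y,X_n)$ are squeezed to the desired limits. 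Your approach is cleaner and yields explicit closed-form expressions for $m$ that make the continuity transparent; the paper's approach is more self-contained, avoiding any appeal to eigenvalue perturbation theory and staying entirely within the partial-order framework used throughout \S\ref{sec:technical}. The homeomorphism paragraph is essentially identical in both proofs.
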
 \begin{rem*} The spaces
$(\bb{S}_D,d)$ and $(\bb{S}_D,d_{1})$ are \emph{not} homeomorphic, and look very different on the boundary $\bb{S}_D \setminus \bb{S}_D^\circ$.
For instance, if $P\in\bb{S}_D$ is an orthogonal projection onto a proper subspace,
then $Y_{t}=(1-t)P+t$ converges to $P$ in $d_{1}$ as $t\rightarrow0$,
but $d(P,Y_{t})=1$ for all $t>0$ (since $m(P,Y_{t})=0$). The space
$(\bb{S}_D,d_{1})$ is compact, but $(\bb{S}_D,d)$ has an
uncountable number of components. \end{rem*} \begin{proof} We will
show that $m(Y,X)=\lim_{n}m(Y,X_{n})$ and $m(X,Y)=\lim_{n}m(X_{n},Y)$.
Since $Y\in  \bb{S}_D^\circ$, we have $Y>\delta \bbI$ for some $\delta>0$. Given $\epsilon>0$
we have $\tr|X_{n}-X|<\epsilon$ and thus $X_{n}\le X+\epsilon \bbI$ and
$X\le X_{n}+\epsilon  \bbI$ for large enough $n$. 

We first show that $m(X,Y)=\lim_{n}m(X_{n},Y)$.  Let $\epsilon >0$. Given $\lambda\le m(X,Y)$,
so $\lambda X\le Y$, we have $\lambda Y  \le  X  \le \ X_n +\epsilon  \bbI$. Thus
$ (\lambda-\tfrac{\epsilon}{\delta})Y\le X_{n}$ for large enough $n$.  It follows that $\liminf_{n} m(X_n,Y)\ge m(X,Y)-\tfrac{\epsilon}{\delta}$.  On the other hand if $\lambda \le \limsup_n m(X_n,Y)$, then we have $\lambda Y  \le  X_{n_j}  \le  X +\epsilon  \bbI$
along a subsequence $n_j\rightarrow \infty$.  Thus $(\lambda-\tfrac{\epsilon}{\delta}) Y \le X$, and so $\limsup_{n}m(X_n,Y)\le m(X,Y)+\tfrac{\epsilon}{\delta}$. We have shown
$$ m(X,Y)-\tfrac{\epsilon}{\delta} \ \le \ \liminf_{n} m(X_n,Y) \ \le \ \limsup_{n}m(X_n,Y)\le m(X,Y)+\tfrac{\epsilon}{\delta}.$$
Taking $\epsilon \rightarrow 0$, we see that $\lim_nm(X_n,Y)=m(X,Y)$ as claimed.

Now we show that $m(Y,X)=\lim_{n}m(Y,X_{n})$.  Let $0<t<1$ and choose
$\epsilon$ small enough that $t\epsilon\le(1-t)\delta$. Given $\lambda\le m(Y,X)$, we have
\[t\lambda X_{n} \ \le \ t\lambda (X+\epsilon  \bbI ) \ \le \ t Y + (1-t)\delta  \bbI \ \le \ Y\]
for large $n$. Thus $\liminf_n m(Y,X_n)\ge t m(X,Y)$.   Similarly, given $\lambda\le\limsup_{n}m(Y,X_{n})$,
we have $ t\lambda X\le t\lambda (X_{n_j}+ \epsilon \bbI)\le Y$
along a sub-sequence $n_j\rightarrow \infty$. Thus
\[
tm(Y,X)\le\liminf_{n\rightarrow\infty}m(Y,X_{n})\le\limsup_{n\rightarrow\infty}m(Y,X_{n})\le\frac{1}{t}m(Y,X).
\]
Taking $t\rightarrow 1$, we find that $m(Y,X)=\lim_{n}m(Y,X_{n})$, completing the proof of (1).

To prove that $( \bb{S}_D^\circ,d_{1})$ and $( \bb{S}_D^\circ,d)$ are homeomorphic, note that $d_{1}\le2d$
by  Lemma \ref{lem:metric}. Thus convergence in $d$ implies convergence in $d_{1}$ on all of $\bb{S}_D$. On the other hand, if $X_{n} \rightarrow X \in  \bb{S}_D^\circ$ with respect to $d_{1}$ then  $\lim_{n}d(X_{n},X)=d(X,X)=0$, by
the first part of the lemma, so $X_{n}$ converge to $X$ with respect to $d$. \end{proof}

For any map $\phi\in \mc{P}_D$ we define the \emph{contraction coefficient}
\begin{equation}\label{eq:c(phi)}
 c(\phi)  \ = \  \sup\setb{d(\phi\cdot X,\phi\cdot Y)}{X,Y\in\bbS_D}.
 \end{equation}
 The following Lemma lists various properties of $c(\phi)$.
\begin{lem}\label{lem:contract} 
Let $\phi\in \mc{P}_D$, then
\begin{enumerate}
\item For $X,Y\in \bbS_D$, $d(\phi\cdot X,\phi\cdot Y)\le c(\phi)d(X,Y)$.
\item We have $c(\phi)\le 1$ and $c(\phi)<1$ if and only if $\phi\in \mc{P}_D^{\circ}.$ 
\item If $\phi'\in \mc{P}_D$, then $c(\phi'\circ \phi) \le c(\phi')c(\phi).$
\item $c(\phi)=c(\phi^{*})$.
\end{enumerate}
\end{lem}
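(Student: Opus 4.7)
The plan is to handle parts (1)--(4) in turn, with part (1) as the main technical step providing the key contraction estimate that underlies parts (2) and (3). For part (1), I will carry out a direct computation using the explicit distance formula of Lemma \ref{lem:dformula}. Assuming $X\ne Y$, I parameterize the segment through $X,Y$ in $\bb{S}_D$ by writing $X=u_1 A_-+u_2 A_+$ and $Y=v_1 A_-+v_2 A_+$, where $A_\pm$ are the endpoints. By linearity, $\phi\cdot X$ and $\phi\cdot Y$ lie on the segment through $\phi\cdot A_-$ and $\phi\cdot A_+$ in $\bb{S}_D$, whose own endpoints I denote $B_\pm$; I then write $\phi\cdot A_\pm=(1-a_\pm)B_-+a_\pm B_+$. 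Setting $\alpha=\tr[\phi(A_-)]$ and $\beta=\tr[\phi(A_+)]$, the projective action yields $\phi\cdot X=(1-a_X)B_-+a_X B_+$ with $a_X=(u_1\alpha a_-+u_2\beta a_+)/(u_1\alpha+u_2\beta)$ and similarly for $a_Y$. Substituting these into Lemma \ref{lem:dformula}, the target refinement
\[
d(\phi\cdot X,\phi\cdot Y)\ \le\ d(\phi\cdot A_-,\phi\cdot A_+)\,d(X,Y)
\]
will reduce, after a cross term of the form $\alpha\beta(u_1v_2+u_2v_1)(a_-+a_+-2a_-a_+)$ cancels on both sides, to the manifestly nonnegative bound $2u_1 v_1\alpha^2 a_-(1-a_-)+2u_2 v_2\beta^2 a_+(1-a_+)\ge 0$. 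Part (1) then follows since $d(\phi\cdot A_-,\phi\cdot A_+)\le c(\phi)$ by definition. The main obstacle is orchestrating this cancellation cleanly.

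Part (2) is a compactness/continuity argument. The bound $c(\phi)\le 1$ is immediate from Lemma \ref{lem:isametric}(1). If $\phi\in\mc{P}_D^\circ$, Lemma \ref{lem:phiC} shows $\phi\cdot$ maps $\bb{S}_D$ into $\bb{S}_D^\circ$, where $d<1$ by Lemma \ref{lem:isametric}(2); combining $d_1$-compactness of $\bb{S}_D$, continuity of $\phi\cdot$, and the homeomorphism on $\bb{S}_D^\circ$ from Lemma \ref{lem:homeo}, the supremum is attained and strictly less than $1$. Conversely, if $\phi\in\mc{P}_D\setminus\mc{P}_D^\circ$, I pick $X\in\bb{S}_D$ with $\phi\cdot X\notin\bb{S}_D^\circ$ and let $Y=\bbI/D$; then $\phi\cdot Y\in\bb{S}_D^\circ$ by Lemma \ref{lem:phiC}, which forces $d(\phi\cdot X,\phi\cdot Y)=1$ via Lemma \ref{lem:isametric}(2). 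Part (3) is then immediate from part (1) applied twice, after noting $(\phi\circ\phi')\cdot X=\phi\cdot(\phi'\cdot X)$ (the normalization factors scale out), which yields $d((\phi\circ\phi')\cdot X,(\phi\circ\phi')\cdot Y)\le c(\phi)c(\phi')d(X,Y)\le c(\phi)c(\phi')$.

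For part (4), my plan is to exploit the self-duality of $\bb{P}_D$ under the Hilbert-Schmidt pairing, which yields the variational formula $m(X,Y)=\inf\{\tr[ZX]/\tr[ZY]:Z\in\bb{P}_D,\ \tr[ZY]>0\}$. Writing $d(A,B)=f(m(A,B)m(B,A))$ with $f(s)=(1-s)/(1+s)$ monotonically decreasing, this gives the sup-representation
\[
d(A,B)\ =\ \sup_{Z,W\in\bb{S}_D^\circ}f(r(A,B,Z,W)),\qquad r(A,B,Z,W)\ =\ \frac{\tr[ZA]\tr[WB]}{\tr[ZB]\tr[WA]}.
\]
The key observations are that $r$ is symmetric under the swap $(A,B)\leftrightarrow(Z,W)$, and that $r(\phi(X),\phi(Y),Z,W)=r(X,Y,\phi^*(Z),\phi^*(W))$ by the definition of the adjoint. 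Combining these facts with scale-invariance of $d$,
\[
c(\phi)\ =\ \sup_{X,Y,Z,W\in\bb{S}_D^\circ}f(r(X,Y,\phi^*(Z),\phi^*(W)))\ =\ \sup_{Z,W}d(\phi^*\cdot Z,\phi^*\cdot W)\ =\ c(\phi^*).
\]
The degenerate case $\phi\notin\mc{P}_D^\circ$ is separately covered by part (2), which gives $c(\phi)=c(\phi^*)=1$ since $\phi\in\mc{P}_D^\circ\iff\phi^*\in\mc{P}_D^\circ$ (noted in the remarks following Assumption \ref{ass:notransient}).
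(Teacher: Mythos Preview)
Your proposal is correct and follows essentially the same approach as the paper's proof. The only cosmetic difference is in part (1): where the paper encodes $\phi$ on $\operatorname{span}\{A_-,A_+\}$ by a $2\times 2$ matrix with nonnegative entries and drops the nonnegative terms $2\alpha\gamma u_1v_1+2\beta\delta u_2v_2$ from the denominator to obtain the factor $\tfrac{|\alpha\delta-\beta\gamma|}{\alpha\delta+\beta\gamma}=d(\phi\cdot A_-,\phi\cdot A_+)$, you instead parameterize by the traces $\alpha=\tr[\phi(A_-)],\ \beta=\tr[\phi(A_+)]$ and positions $a_\pm$, arriving at the equivalent residual $2u_1v_1\alpha^2 a_-(1-a_-)+2u_2v_2\beta^2 a_+(1-a_+)\ge 0$; parts (2)--(4) match the paper's arguments line for line.
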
 
\begin{rem*} Thus, if $\phi\in \mc{P}_D^\circ$, then the projective action of $\phi$ on $\bb{S}_D$ is \emph{strictly} contractive with respect to the metric $d$. \end{rem*}

\begin{proof}
To prove (1), suppose that $\phi\in \mc{P}_D$. If $\phi\cdot X=\phi\cdot Y$, then $0=d(\phi\cdot X,\phi\cdot Y)\le c(\phi) d(X,Y).$ Now suppose that $\phi\cdot X\neq\phi\cdot Y$ and let $t_{\pm}$ and $A_{\pm}$
be as in the proof of Lemma \ref{lem:metric}. Similarly, let $A'_{\pm}=s_{\pm}\phi\cdot X+(1-s_{\pm})\phi\cdot Y$ with $s_{\pm}$ 
the largest and smallest real numbers such that $s\phi\cdot X+(1-s)\phi\cdot Y\in\bb{S}_D$. 

The linear map $\phi$ maps the two dimensional space spanned by $A_{-},A_{+}$ into the two dimensional space spanned by $A'_{-},A'_{+}$. Let the
matrix of this map (with respect to the bases $A_{-},A_{+}$ for the domain and $A'_{-},A'_{+}$ for the range) be
\[
\begin{pmatrix}\alpha & \beta\\
\gamma & \delta
\end{pmatrix}.
\]
We claim that $\alpha,\beta,\gamma,\delta\ge0$. To see that $\alpha$, $\gamma \ \ge 0$, note that
\[
\phi(A_{-})\ = \ t_{-}\phi(X)+(1-t_{-})\phi(Y)\ =\ \left[t_{-}\tr\phi(X)\phi\cdot X+(1-t_{-})\tr\phi(Y)\phi\cdot Y\right].
\]
Thus
\[
\phi\cdot A_{-}\ =\ \frac{t_{-}\tr[\phi(X)]}{\tr[\phi(A_{-})]}\phi\cdot X+\left[1-\frac{t_{-}\tr[\Phi(X)]}{\tr[\phi(A_{-})]}\right]\phi\cdot Y\quad.
\]
Since $\phi\cdot A_{-}\in\bb{S}_D$ we must have 
$s_{-}\tr[\phi(A_{-})] \le    t_{-}\tr[\phi(X)]  \le  s_{+}\tr[\phi(A_{-})]$.
Thus $\phi(A_{-})=\alpha A'_{-}+\gamma A'_{+}$ with
\[
\alpha \ =\ \frac{t_{-}\tr[\phi(X)]-s_{-}\tr[\phi(A_{-})]}{s_{+}-s_{-}}\ge0,\quad\gamma \ = \ \frac{s_{+}\tr[\phi(A_{-})]-t_{-}\tr[\phi(X)]}{s_{+}-s_{-}}\ge0\quad.
\]
The verification that $\beta\ge0$ and $\delta\ge0$ is similar.

We also have $\alpha\delta+\beta\gamma>0$. Indeed if $\alpha\delta+\beta\gamma$
were zero, then the matrix would have a zero row or a zero column.
A zero column would imply that one of $\phi(A_{-})$ or $\phi(A_{+})$ is
zero, a contradiction. A zero row would imply that $\phi(A_{+})$ and
$\phi(A_{-})$ were both proportional either to $A'_{+}$ or $A'_{-}$.
Suppose both were proportional to $A'_{-}$. Then both points would
lie on the line between $0$ and $A'_{-}$ and also on the line between
$A'_{-}$ and $A'_{+}$. Since these lines intersect only in $A'_{-}$
we would have $\phi(A_{+})=\phi(A_{-})=A'_{-}$, contradicting 
the assumption that $\phi\cdot X\neq\phi\cdot Y$.

With these preliminaries, we can now prove (1) by computing $d(\phi\cdot X,\phi\cdot Y)$. Let $X=u_{1}A_{-}+u_{2}A_{+}$,
$Y=v_{1}A_{-}+v_{2}A_{+}$. Then,
\[
\begin{aligned}d(\phi\cdot X,\phi\cdot Y)\  & =\ \frac{\abs{(\alpha u_{1}+\beta u_{2})(\gamma v_{1}+\delta v_{2})-(\gamma u_{1}+\delta u_{2})(\alpha v_{1}+\beta v_{2})}}{(\alpha u_{1}+\beta u_{2})(\gamma v_{1}+\delta v_{2})+(\gamma u_{1}+\delta u_{2})(\alpha v_{1}+\beta v_{2})}\\
 & =\ \frac{\abs{\alpha\delta-\beta\gamma}\abs{u_{1}v_{2}-u_{2}v_{1}}}{\alpha\gamma2u_{1}v_{1}+(\alpha\delta+\beta\gamma)(u_{1}v_{2}+u_{2}v_{1})+\beta\delta2u_{2}v_{2}}\\
 & \le\ \frac{\abs{\alpha\delta-\beta\gamma}}{\alpha\delta+\beta\gamma}\frac{\abs{u_{1}v_{2}-u_{2}v_{1}}}{u_{1}v_{2}+u_{2}v_{1}}\\
 & =\ d(\phi\cdot A_{-},\phi\cdot A_{+})d(X,Y)\ \le\ c(\phi)d(X,Y)\quad.
\end{aligned}
\]

Turning now to (2), if $\phi \in \mc{P}_D\setminus \mc{P}_D^\circ$, then $\phi \cdot X\in \bb{S}_D \setminus \bb{S}_D^\circ$ for some $X\in \bb{S}_D\setminus \bb{S}_D^\circ$ (otherwise $\phi$ would be strictly positive). Thus, $c(\phi)=1$, since by Lemma \ref{lem:isametric}, $d(\phi\cdot X,\phi\cdot Y)=1$ for $Y\in \bb{S}_D^\circ$.   To see that $c(\phi)<1$ for $\phi\in \mc{P}_D^\circ$, note that $\phi\cdot$ is a continuous map from $(\bb{S}_D,d_{1})$
into $( \bb{S}_D^\circ,d_{1})$.  By Lemma \ref{lem:homeo},
$
F(X,Y)=d(\phi\cdot X,\phi\cdot Y)
$
is a continuous map of $\bb{S}_D\times\bb{S}_D$ into $\R$,
where we take the $d_{1}$-product topology on $\bb{S}_D\times\bb{S}_D$.
Since $\bb{S}_D\times\bb{S}_D$ is compact we conclude that
there are $X,Y\in\bb{S}_D$ such that $
c(\phi)=d(\phi\cdot X,\phi\cdot Y)$. Since $\phi\cdot X$, $\phi\cdot Y\in  \bb{S}_D^\circ$ we have
$
0<m(\phi\cdot X,\phi\cdot Y)<1$ and $0<m(\phi\cdot Y,\phi\cdot X)$,
so that $c(\phi)=d(\phi\cdot X,\phi\cdot Y)<1$.  

To prove (3), note that $\phi'\circ \phi \cdot X = \phi'\cdot (\phi \cdot X)$, so that
$$d(\phi'\circ \phi \cdot X, \phi'\circ \phi \cdot Y) \ \le \ c(\phi') d(\phi \cdot X, \phi \cdot Y) \ \le \ c(\phi)c(\phi') d(X,Y) \ , $$
by part (1).

Finally, to prove that $c(\phi)=c(\phi^{*})$, we use the variational formula \eqref{eq:variationalm} which implies 
\[
m(X,Y)m(Y,X)=\inf\setb{\frac{\tr [AX]}{\tr [A'X]}\frac{\tr [A'Y]}{\tr [AY]}}{A,A'\in  \bb{S}_D^\circ}\quad.
\]
It follows that
\begin{align*}
m(\phi\cdot X,\phi\cdot Y)m(\phi\cdot Y,\phi\cdot X) 
 & =\inf\setb{\frac{\tr[\phi^{*}(A)X]}{\tr[\phi^{*}(A')X]}\frac{\tr[\phi^{*}(A')Y]}{\tr[\phi^{*}(A)Y]}}{A,A'\in  \bb{S}_D^\circ}\\
 & \ge\ \inf\setb{m(\phi^{*}\cdot A,\phi^{*}\cdot A')m(\phi^{*}\cdot A',\phi^{*}\cdot A)}{A,A'\in  \bb{S}_D^\circ}\quad,
\end{align*}
and thus that
\begin{multline*}
\inf\setb{m(\phi^{*}\cdot X,\phi^{*}\cdot Y)m(\phi^{*}\cdot Y,\phi^{*}\cdot X)}{X,Y\in  \bb{S}_D^\circ}\\
=\inf\setb{m(\phi\cdot X,\phi\cdot Y)m(\phi\cdot Y,\phi\cdot X)}{X,Y\in  \bb{S}_D^\circ}\quad,
\end{multline*}
from which it follows that $c(\phi)=c(\phi^{*}).$ \end{proof} 

\subsection{Existence of $Z_{0}$ and $Z_0'$}\label{sec:existence}
We start by proving Lemma \ref{lem:Assequiv}, which states the equivalence of Assumption \ref{ass:main} to two conditions, which we reformulate here in the notation of the \S\ref{sec:notation}:
\begin{enumerate}
    \item  For some $n_{0}>0$, $ 	\Pr\left[\Phi_{n_0}\in \mc{P}_D^\circ \right]\ > \ 0 $.
    \item  With probability one,  $\phi_0 \in \mc{P}_D$.
\end{enumerate}
Recall that $\Phi_{N}=\phi_{N}\circ\cdots\circ \phi_{0}$, where $\phi_{n}=\phi_{T^{n}\omega}$.
Let
$$ \tau \ = \ \inf \{N_0 \ge 0 \ : \ \Phi_N \in \mc{P}_D^\circ \text{ for } N \ge N_0\} \ . $$
Note that Assumption \ref{ass:main} is equivalent to the statement that $\tau <\infty$ with probability one.

\begin{proof}[Proof of Lemma \ref{lem:Assequiv}] We first show that Conditions (1) and (2) imply $\tau <\infty$ with probability one.  
By ergodicity and condition (1), 
\[
\Pr\left[\bigcup_{k\ge0}\set{\Phi_{n_{0};T^{k}\omega}\in \mc{P}_D^\circ}\right]=1\quad.
\]
Thus with probability $1$ there is $\sigma<\infty$ such that $
\Phi_{n_{0};T^{\sigma}\omega}=\phi_{\sigma+n_0}\circ\cdots\circ\phi_{\sigma}\in \mc{P}_D^\circ$.  By Condition (2) and the shift invariance of probabilities, we have $\phi_n\in \mc{P}_D$ for all $n$, with probability one. By Corollary \ref{cor:ideal}, it follows that
$$\Phi_N \ = \ \phi_N \circ \cdots \phi_{n_0+\sigma+1}\circ \Phi_{n_0;T^{\sigma}\omega} \circ \phi_{\sigma-1} \circ \cdots \circ \phi_0 $$
is strictly positive for $N\ge n_{0}+\sigma$, so $\tau\ge n_0 + \sigma$

Conversely, note that Assumption \ref{ass:main} implies Condition (1) directly. To prove Condition (2), note that for $N>0$, we have $\ker \Phi_N \supset \ker \phi_0$.  It follows that $\ker\phi_0\cap \bbP_D=\{0\}$ if $\Phi_N\in \mc{P}_D^\circ$ for some $N$.  Thus Assumption \ref{ass:main} implies that $\ker \phi_0 \cap \bbP_D = \{0\}$ with probability one. To prove the corresponding statement for $\phi_0^*$, first note that if $\Phi_N=\phi_N\circ \Phi_{N-1}$ is strictly positive, then we have  $\tr \phi_N^*(A) \Phi_{N-1}(B) > 0 $
for every $A,B\in \bbP_D\setminus\{0\}$.  Thus Assumption 1 implies that, with probability one, $\ker \phi_N^* \cap \bbP_D = \{0\}$ for all sufficiently large $N$.  Let $A_{M}$ denotes the event $$A_{M}\ = \ \bigcap_{N\ge M} \setb{\omega }{ \ker \phi_{N;\omega}^* \cap \bbP_D = \{0\} } \ . $$ 
Then $(A_M)_{M=0}^\infty$ is an increasing sequence and  $\Pr[\bigcup_{M} A_{M}]=1$.  Thus $\lim_M \Pr[A_M]=1$.  However, $A_M = T^M(A_0)$ so $\Pr[A_M]=\Pr[A_0]$ for all $M$. We conclude that $\Pr[A_0]=1$ and thus that $\ker \phi_0^* \cap \bbP_D = \{0\}$ with probability one.
\end{proof}

\begin{lem}\label{lem:Phicontract}
Let $c_{N}=c(\Phi_{N})$, with $c(\cdot)$ the contraction coefficient
in \eqref{eq:c(phi)}. Then
\[
\lim_{N\rightarrow\infty}c_{N}^{\nicefrac{1}{N}}=\inf_{N}c_{N}^{\nicefrac{1}{N}}\equiv\kappa
\]
exists almost surely, where $\kappa\in[0,1)$ is non-random and
\begin{equation}\label{eq:kappaformula}
\ln\kappa=\lim_{N\rightarrow\infty}\frac{1}{N}\Ev{\ln c_{N}}=\inf_{N}\frac{1}{N}\Ev{\ln c_{N}}\quad.
\end{equation}
\end{lem}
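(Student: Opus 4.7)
The plan is to apply Kingman's subadditive ergodic theorem to $\beta_N := \ln c_N$. First I would establish a pointwise submultiplicativity. Using the shift-covariance $\phi_n(\omega) = \phi_{0}(T^n\omega)$, the composition $\phi_{N+M+1}\circ\cdots\circ\phi_{N+1}$ evaluated at $\omega$ is exactly $\Phi_M(T^{N+1}\omega)$, so that $\Phi_{N+M+1}(\omega) = \Phi_M(T^{N+1}\omega)\circ\Phi_N(\omega)$. By Lemma \ref{lem:contract}(3) this gives
\[ c_{N+M+1}(\omega) \ \le \ c_N(\omega)\, c_M(T^{N+1}\omega) \quad , \]
so taking logarithms, $\beta_{N+M+1} \le \beta_N + \beta_M\circ T^{N+1}$. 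Since $c_N \le 1$ by Lemma \ref{lem:contract}(2), we have $\beta_N \le 0$, so $\beta_N^+\equiv 0$ is trivially integrable.

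After a harmless reindexing (e.g., setting $\alpha_K := \beta_{K-1}$ for $K\ge 1$, which satisfies the textbook form $\alpha_{K+L} \le \alpha_K + \alpha_L\circ T^K$), Kingman's subadditive ergodic theorem combined with the ergodicity of $T$ implies that $\beta_N/N$ converges almost surely to a deterministic constant $\ln\kappa \in [-\infty,0]$; equivalently $c_N^{\,1/N}\to \kappa$ a.s. for some $\kappa\in[0,1]$. To obtain equation \eqref{eq:kappaformula}, I would take expectations of the pointwise inequality; by measure-preservation of $T$, the deterministic sequence $\Ev[\beta_N]$ inherits the same subadditive structure, so Fekete's lemma yields
\[ \ln\kappa \ =\ \lim_{N\to\infty}\frac{1}{N}\Ev[\ln c_N] \ =\ \inf_{N}\frac{1}{N}\Ev[\ln c_N] \quad , \]
identifying the almost-sure limit with the deterministic infimum of normalized expectations.

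Finally, to show $\kappa<1$, I would invoke Assumption \ref{ass:noselectionrules}: there exists $N_0$ with $\Pr[\Phi_{N_0}\in\mc{P}_D^{\circ}]>0$, on which event Lemma \ref{lem:contract}(2) forces $c_{N_0}<1$ and hence $\ln c_{N_0}<0$. Since $\ln c_{N_0}\le 0$ everywhere and is strictly negative on a set of positive probability, $\Ev[\ln c_{N_0}]$ is either strictly negative or $-\infty$; in either case $\ln\kappa \le (1/N_0)\Ev[\ln c_{N_0}]<0$, giving $\kappa\in[0,1)$.

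The main obstacle is essentially notational: the natural submultiplicativity arises in the form $\beta_{N+M+1} \le \beta_N + \beta_M\circ T^{N+1}$ rather than the textbook $\beta_{N+M}\le \beta_N + \beta_M\circ T^N$, so a small reindexing is required before invoking Kingman's theorem. One should also permit $\Ev[\ln c_N] = -\infty$, which is allowed in Kingman's hypothesis (since only the positive part need be integrable) and merely strengthens the conclusion to $\kappa = 0$, still in $[0,1)$.
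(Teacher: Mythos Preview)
Your proposal is correct and follows essentially the same approach as the paper: apply Kingman's subadditive ergodic theorem to $\ln c_N$, using the submultiplicativity of $c(\cdot)$ from Lemma~\ref{lem:contract}(3) and the factorization of $\Phi_{N+M+1}$ via the shift, and then conclude $\kappa<1$ from Assumption~\ref{ass:noselectionrules} and Lemma~\ref{lem:contract}(2). In fact you are more careful than the paper about the indexing (the paper writes $\Phi_{N+M}=\Phi_{N;T^{M+1}\omega}\circ\Phi_M$, which is off by one; your $\Phi_{N+M+1}=\Phi_M(T^{N+1}\omega)\circ\Phi_N(\omega)$ and the reindexing $\alpha_K=\beta_{K-1}$ are the accurate versions), and you also handle the possibility $\Ev[\ln c_N]=-\infty$, which the paper leaves implicit.
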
 

\begin{proof}Note that
$ \ln c_{N+M} = \ln c(\Phi_{N+M}) = \ln c(\Phi_{N;T^{M+1}\omega}\circ\Phi_{M}). \le \ln c(\Phi_{N;T^{M+1}\omega})+\ln c(\Phi_{M})$ 
and $\ln c(\Phi_{N})\le0$ it follows from the subadditive ergodic
theorem \cite{Kingman} that the limit and infimum exist, and that   equation \eqref{eq:kappaformula} holds. 
Since $0\le c(\Phi_{N})\le1$, we have $0\le\kappa\le1$. By Condition (1) of Lemma \ref{lem:Assequiv} and
Lemma \ref{lem:contract}, we see that $c_{n_0}<1$
with positive probability. Thus $n_0^{-1}\Ev{\ln c_{n_0}}<0$ and so $\ln\kappa<0$.
\end{proof}

We can now prove the existence of the limiting processes: \begin{lem}\label{lem:Z'}Let
$\Phi_{N}^*$ and $L_{N}$ be as in   equation \eqref{eq:PhiStar_N}. As $N\rightarrow \infty$, $L_{N}$ converges
almost surely to a limit $Z_{0}'$ such that:
\begin{enumerate}
\item $Z_{0}'\in  \bb{S}_D^\circ$ almost surely;  
\item $\phi_{0}^*\cdot Z_{0;T\omega}'=Z_{0;\omega}'$; and
\item for $Y\in\bb{S}_D$ and $N\ge0$, we have $d(\Phi_{N}^*\cdot Y,Z_{0}')\ \le c(\Phi_{N}).$ 
\end{enumerate}
\end{lem}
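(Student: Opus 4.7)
The plan is to first prove that for any fixed $Y\in\bb{S}_D$ the orbit $(\Phi_N^*\cdot Y)_{N\ge 0}$ converges to some $Z_0'(Y)\in\bb{S}_D^\circ$, and then use the fixed-point identity $L_N=\Phi_N^*\cdot L_N$ to transfer convergence to $L_N$ itself.  By Lemma \ref{lem:stopping}, combined with the fact that $\mc{P}_D^{\circ}$ is a two-sided ideal in the semi-group $\mc{P}_D$ (Corollary \ref{cor:ideal}), almost surely $\Phi_N\in \mc{P}_D^{\circ}$ for every $N\ge \tau$, so $L_N$ is well defined and lies in $\bb{S}_D^\circ$ for such $N$.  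By Lemma \ref{lem:Phicontract}, the contraction coefficient $c(\Phi_N)$ converges to zero exponentially fast, almost surely.

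The core Cauchy estimate comes from the factorization $\Phi_M^*=\Phi_N^*\circ (\phi_{N+1}\circ\cdots\circ\phi_M)^*$ for $M>N$, which yields $\Phi_M^*\cdot Y=\Phi_N^*\cdot Z_{N,M}$ for some $Z_{N,M}\in\bb{S}_D$; parts (1) and (4) of Lemma \ref{lem:contract} then give
\[
d(\Phi_M^*\cdot Y,\Phi_N^*\cdot Y)\ \le\ c(\Phi_N^*)\, d(Z_{N,M},Y)\ \le\ c(\Phi_N).
\]
Since $d\ge \tfrac{1}{2} d_1$ by Lemma \ref{lem:metric}, the sequence $(\Phi_N^*\cdot Y)$ is $d_1$-Cauchy and converges in the compact space $(\bb{S}_D,d_1)$ to some $Z_0'(Y)\in\bb{S}_D$.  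I expect the main technical obstacle to be confirming that $Z_0'(Y)$ actually sits in $\bb{S}_D^\circ$, since the metric $d$ is unbounded and discontinuous on the boundary; this is precisely what Lemma \ref{lem:homeo} is built for.  Fixing $N\ge \tau$ so that $\Phi_N^*\cdot Y\in\bb{S}_D^\circ$, Lemma \ref{lem:homeo} lets me pass $M\rightarrow \infty$ in the Cauchy bound to obtain $d(Z_0'(Y),\Phi_N^*\cdot Y)\le c(\Phi_N)<1$; Lemma \ref{lem:isametric}(2) then forces $Z_0'(Y)\in\bb{S}_D^\circ$.  This establishes claim (1) and, combined with the trivial bound $c(\Phi_N)=1$ for $N<\tau$ (Lemma \ref{lem:contract}(2)), also claim (3).

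To promote this into convergence of $L_N$, I will apply the fixed-point relation $L_N=\Phi_N^*\cdot L_N$ together with Lemma \ref{lem:contract}(1) to obtain $d(L_N,\Phi_N^*\cdot Y)=d(\Phi_N^*\cdot L_N,\Phi_N^*\cdot Y)\le c(\Phi_N)$; the triangle inequality then gives $d(L_N,Z_0'(Y))\le 2c(\Phi_N)\to 0$, so in particular $Z_0'(Y)$ is independent of $Y$ and I may set $Z_0':=Z_0'(Y)$.  For the shift-covariance identity (2), the relation $\phi_{k;\omega}=\phi_{k-1;T\omega}$ yields $\Phi_{N+1;\omega}=\Phi_{N;T\omega}\circ\phi_{0;\omega}$, hence $\Phi_{N+1;\omega}^*\cdot Y=\phi_{0;\omega}^*\cdot(\Phi_{N;T\omega}^*\cdot Y)$.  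Letting $N\to\infty$, the left side tends to $Z_{0;\omega}'$, while the right side tends to $\phi_{0;\omega}^*\cdot Z_{0;T\omega}'$ by continuity of the projective action of $\phi_{0;\omega}^*$ at the interior point $Z_{0;T\omega}'$—which is valid because $\phi_{0;\omega}^*\in\mc{P}_D$ almost surely and $Z_{0;T\omega}'\in\bb{P}_D\setminus\{0\}$ force $\phi_{0;\omega}^*(Z_{0;T\omega}')\neq 0$.
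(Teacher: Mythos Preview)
Your proof is correct and rests on the same contraction-mapping mechanism as the paper, but the packaging differs slightly.  The paper works with the nested images $B_N=\Phi_N^*\cdot\bb{S}_D$: these satisfy $B_{N}\subset B_{N-1}$, lie in $\bb{S}_D^\circ$ for $N\ge\tau$ (hence are $d$-compact via Lemma \ref{lem:homeo}), and have $\diam B_N\le c(\Phi_N)\to 0$, so $\bigcap_N B_N$ is a singleton $\{Z_0'\}$.  Since both $L_N=\Phi_N^*\cdot L_N$ and $\Phi_N^*\cdot Y$ lie in $B_N$, the bound (3) and $d(L_N,Z_0')\le c(\Phi_N)$ drop out immediately, without your triangle-inequality step.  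For (2), the paper notes that $\phi_0^*\cdot L_{N;T\omega}=\Phi_{N+1}^*\cdot L_{N;T\omega}\in B_{N+1}$, so this sequence converges to $Z_{0;\omega}'$ directly, avoiding the continuity argument you invoke.  Your Cauchy-sequence route and the continuity step are perfectly valid alternatives; the nested-set formulation just organizes the same estimates a bit more economically and yields the sharper constant $d(L_N,Z_0')\le c(\Phi_N)$ rather than $2c(\Phi_N)$.
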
 \begin{proof} Let $B_{N}=\Phi_{N}^*\cdot\bb{S}_D$, so $B_N\subset B_{N-1}$.
It follows from Assumption \ref{ass:main} that $B_{N}\subset  \bb{S}_D^\circ$
for large enough $N$.  Thus, by Lemma \ref{lem:homeo},
$B_{N}$ is compact in the $d$-topology for large $N$ (since $B_N$ is compact in the $d_1$-topology for every $N$). Thus $\cap_{N}B_{N}$ is non-empty.
On the other hand,
\[
\diam B_{N}\ =\ \sup\setb{d(\Phi_{N}^*\cdot X,\Phi_{N}^*\cdot Y)}{X,Y\in\bb{S}_D}\ \le\ c(\Phi_{N}) \ \rightarrow \ 0 \quad,
\]
by Lemmas \ref{lem:contract} and \ref{lem:Phicontract}. Thus $\cap_{N}B_{N}=\{Z_{0}'\}$
for a single point $Z_{0}'$. It is clear that $Z_{0}'\in  \bb{S}_D^\circ$ almost
surely.

We claim that $L_{N}\in B_{N}$. Indeed, since $\Phi_{N}^*(L_{N})=\lambda_{N}L_{N}$
and $\tr [L_{N}]=1$, it follows that $\Phi_{N}^*\cdot L_{N}=L_{N}$.
Thus,  $d(L_{N},Z_{0}') \le \diam B_N \rightarrow0$ almost surely. It follows that
$
Z_{0;T\omega}'=\lim_{N\rightarrow\infty}L_{N;T\omega}$.
However $L_{N;T\omega}$ is a normalized  eigenmatrix for
$
\Phi_{N;T\omega}^*=\phi_{1}^*\circ\cdots\circ\phi_{N+1}^*$.
Thus $\phi_{0}^*\cdot L_{N;T\omega}=\ \Phi_{N+1}^*\cdot L_{N;T\omega}\in B_{N+1},$
from which it follows that $\phi_{0}^*\cdot L_{N;T\omega} \rightarrow Z_{0}'$.
Finally, let $Y\in\bb{S}_D$. Then $\Phi_{N}^*\cdot Y\in B_{N}$,
so $d(\Phi_{N}^*\cdot Y,Z_{0}')\ \le\ \diam B_{N}\ \le\ c(\Phi_{N})$
as claimed. \end{proof}

A similar argument can be applied to $\Phi_{-N}^*$ 
to conclude the existence and properties of $Z_{0}$.  To this end, let $\psi_n = \phi_{-n}^*$ and 
$$\Psi_N \ = \ \psi_N \circ \cdots \circ \psi_0 \ = \ \phi_{-N}^* \circ \cdots \circ \phi_0^* \ = \  \Phi_{-N}^* \ . $$   First we note that the process $\Psi_N$ satisfies Assumption \ref{ass:main}:
\begin{lem}\label{lem:stopping} With probability one, there is $N_0'<\infty $ such that $\Psi_N \in \mc{P}_D^\circ$ for all $N\ge N_0'$.
\end{lem}
\begin{proof} We will show that Conditions (1) and (2) of Lemma \ref{lem:Assequiv} hold. Condition (2) for $\psi_0$ follows from the corresponding statement for $\phi_0$, since $\psi_0=\phi_0^*$ and $\psi_0^* =\phi_0$. To see that Condition (1) holds, note that $\Psi_{n_0}^*=\Phi_{-n_0;\omega}=\Phi_{n_0;T^{-n_0}\omega}$.  Thus by the shift invariance of probabilities, $\Pr[\Psi_{n_0}^* \in \mc{P}_D^\circ]  =  \Pr[\Psi_{n_0} \in \mc{P}_D^\circ] > 0 $.
\end{proof}

The existence of $Z_0$ follows directly from Lemma \ref{lem:Z'} applied to $\Psi_N = \Phi_{-N}^*$:
\begin{lem}\label{lem:Z}Let
$\Phi_{N}$ and $R_{N}$ be as in   equation \eqref{eq:Phi_N}. As $N\rightarrow -\infty$, $R_{N}$ converges
almost surely to a limit $Z_{0}$ such that:
\begin{enumerate}
\item $Z_{0}\in  \bb{S}_D^\circ$ almost surely; 
\item $\phi_{0}\cdot Z_{0;T^{-1}\omega}=Z_{0}$; and
\item for $Y\in\bb{S}_D$ and $N\ge0$, we have $d(\Phi_{-N} \cdot Y,Z_{0})\ \le c(\Phi_{-N})$.
\end{enumerate}
\end{lem}

\section{Proofs of the Theorems}\label{sec:proofs}
\subsection{Proof of Theorem \ref{thm:main}} We have already
shown the existence of the limits $\lim_{N\rightarrow \infty} L_N=Z_{0}'$ and $\lim_{N\rightarrow -\infty }R_N=Z_{0}$. Let $Z_{n}=Z_{0;T^{n}\omega}$ and $Z_{n}'=Z_{0;T^{n}\omega}'$.
Then $Z_{0}=\phi_{0}\cdot Z_{-1}$ and $Z_{0}=\phi_{0}^{*}\cdot Z_{1}'$
by Lemmas \ref{lem:Z'} and \ref{lem:Z}. Thus 
$Z_{n}=\phi_{n}\cdot Z_{n-1}$ and $Z_n'=\phi_{n}^{*}\cdot Z_{n+1}'$
as claimed. \qed

\subsection{Proof of Theorem \ref{thm:bound}}

 Let $\mu\in(\kappa,1)$
be as in Lemma \ref{lem:Phicontract}. Given $m<n$, let
$
\Psi_{n,m}=\phi_{n}\circ\cdots\circ\phi_{m}$
and $P_{n,m}(M)=\tr[ Z_{m}'M]\; Z_{n}.$  To prove   equation \eqref{eq:main}, we must show that
\begin{equation}\label{eq:mainagain}
\tr \left | \frac{1}{\tr[\Psi_{n,m}^*(\bbI)]}\Psi_{n,m}(M)-P_{n,m}(M) \right | \ \le \ C_{\mu,x} \mu^{n-m} \tr |M|
\end{equation}
whenever $m\le x \le n$. In fact, it suffices to prove equation \eqref{eq:mainagain} for $M\in\bb{S}_D$. Indeed, any matrix $M$ can be written as a linear combination 
\begin{equation}\label{eq:Msplit}
	M=\sum_{j=1}^4 a_j M_j \qquad  \text{with }M_j\in \bb{S}_D\text{ and } \sum_{j=1}^4|a_j|\le 2\tr[|M|]\quad .
\end{equation} 
Thus equation \eqref{eq:mainagain} for $M\in \bb{S}_D$ implies the same bound for general $M$, with the constant $C_{\mu,x}$ increased by a factor of $2$. (To see that   equation \eqref{eq:Msplit} holds, note that for self-adjoint $M$ we have $M= \tr[M_+] \rho_{+}-\tr[M_-]\rho_{-}$, where $\rho_\pm = \frac{1}{\tr[M_\pm]} M_\pm$ with $M_\pm$ the positive and negative parts of $M$. 
For a general matrix $M$, we proceed by applying this decomposition to the real and imaginary parts 
$M=M_{r}+iM_{i}$, where $M_{r}=\frac{1}{2}(M+M^{\dagger})$  and $ M_{i}=\frac{1}{2i}(M-M^{\dagger})$.)

Now let $M\in\bb{S}_D$ be fixed. Note that
$\Psi_{n,m} = \Phi_{m-n;T^{n}\omega}$ and $\Psi_{n,m}^{*} = \Phi_{n-m;T^{m}\omega}^{*}$.
By Lemma \ref{lem:metric} and Lemma \ref{lem:Z} we
have
\[
\tr\left[\abs{\frac{1}{\tr\Psi_{n,m}(M)}\Psi_{n,m}(M)-Z_{n}}\right]\ \le\ 2c(\Psi_{n,m})\quad.
\]
By Lemma \ref{lem:metric} and Lemma \ref{lem:Z'}, we have
\[
\abs{\frac{\tr[\Psi_{n,m}(M)]}{\tr[\Psi_{n,m}^{*}(\bbI)]}-\tr [Z_{m}'M]}\ \le\ 2c(\Psi_{n,m})\quad,
\]
where we have noted that $ \tr[\Psi_{n,n}^{*}(\bbI)M] =\tr [\Psi_{n,m}(M)]$.
Thus
\begin{equation}
\tr\left[\abs{\frac{1}{\tr[\Psi_{n,m}^{*}(\bbI)]}\Psi_{n,m}(M)-P_{n,m}(M)}\right]\ \le\ 2c(\Psi_{n,m})\left(1+\frac{\tr[\Psi_{n,m}(M)]}{\tr[\Psi_{n,m}^{*}(\bbI)]}\right)\ \le4c(\Psi_{n,m}),\label{eq:key}
\end{equation}
since $\tr[\Psi_{n,m}(M)]=\tr[\Psi_{n,m}^{*}(\bbI)M]\ \le\ \tr[\Psi_{n,m}^{*}(\bbI)]$
for $M\in  \bb{S}_D^\circ$.

To prove   equation \eqref{eq:mainagain}, first suppose that $m\le x<n$. Then
$c(\Psi_{n,m}) \le c(\Psi_{n,x+1})c(\Psi_{x,m})$.
By Lemma \ref{lem:Phicontract} we have
$c(\Psi_{n,x+1}) \le D_{\mu,x}\mu^{n-x}$ and $c(\Psi_{x,m}) \le D_{\mu,x}\mu^{x-m}$
for suitable $D_{\mu,x}<\infty$. Thus
$c(\Psi_{n,m}) \le D_{\mu,x}^{2}\mu^{n-m}$,
so equation \eqref{eq:mainagain} follows from equation \eqref{eq:key}. For
$m\le x=n$, we have
$c(\Psi_{n,m}) =  c(\Psi_{x,m})  \le D_{\mu,x}\mu^{x-m},$
so   equation \eqref{eq:mainagain} holds in this case as well. \qed

\subsection{Proof of Theorem \ref{thm:corelations}} We have
\begin{multline*}
W(O_{2}O_{1})-W(O_{2})W(O_{1})\\
=\tr\left[\left(\wt{O}_{2}-W(O_2)\wt{\Psi}_{2}\right)\circ\wt{\phi}_{m_2-1}\circ\cdots\circ\wt{\phi}_{n_{1}+1}\circ\left(\wt{O}_{1}(\wt{Z}_{m_1-1})-W(O_1)\wt{\Psi}_{1}(\wt{Z}_{m_1-1})\right)\right].
\end{multline*}
By Theorem \ref{thm:bound}, there is $0<\mu<1$ such that
\[
\tr\left[\abs{\wt{\phi}_{m_2-1}\circ\cdots\circ\wt{\phi}_{n_{1}+1}(M)-\wt{Z}_{n_1}\tr M}\right] \le\ C_{\mu,x}\mu^{m_{2}-n_{1}}\tr[|M|]\quad,
\]
for any $D\times D$ matrix $M$, where we have used the fact that
$\wt{\phi}_{n_1+1}^{*}\circ\cdots\wt{\phi}_{m_2-1}^{*}(\bbI)=\bbI$.
Equation \eqref{eq:corelations} then follows. \qed \\

\subsection*{Acknowledgments} RM acknowledges the support of the
IBM Research Frontiers Institute and funding from the MIT-IBM Watson AI Lab under the project Machine Learning in Hilbert space. JS acknowledges the support of 
the National Science Foundation under Grant No. 1500386 and Grant No. 1900015, and thanks Lubashan Pathirana for insightful discussions. We thank Natalie Taylor and Jules Murphy at IBM for their help with the graphic design of Figure 1.

\bibliographystyle{plain}
\bibliography{mybib}

\end{document}